\crefname{definition}{Def.}{Def.}
\crefname{equation}{}{}
\crefname{figure}{Fig.}{Fig.}
\crefname{problem}{Prob.}{Prob.}
\crefname{algorithm}{Algo.}{Algo.}
\crefname{theorem}{Thm.}{Thm.}
\crefname{lemma}{Lem.}{Lem.}
\crefname{appendix}{App.}{App.}
\crefname{line}{line}{lines}
\crefname{item}{item}{items}
\crefname{section}{Sec.}{Sec.}
\crefname{corollary}{Cor.}{Cor.}
\crefname{proposition}{Prop.}{Prop.}
\crefname{example}{Ex.}{Ex.}
\crefname{remark}{Rem.}{Rem.}
\colorlet{darkgreen}{green!80!black}
\colorlet{darkred}{red!80!black}
\tikzset{auto, >= stealth}
\tikzset{every edge/.append style={thick, shorten >= 1pt}}
\tikzset{initial/.style={draw, thick, <-, shorten <=1pt}}
\tikzset{player2/.style = {draw, thick, shape=circle, minimum size=5mm}}
\tikzset{player1/.style = {draw, thick, shape=rectangle, minimum size=5mm}}
\tikzset{player3/.style = {draw, thick, shape=diamond, minimum size=5mm}}
\tikzset{bplayer0/.style = {draw, thick, shape=ellipse, minimum size=3mm,text width=0.65cm}}
\newcommand\hpos{1.8}
\newcommand\ypos{1.4}
\newcommand{\bigO}{\mathcal{O}}
\newcommand{\inodd}{\in_{\text{odd}}}
\newcommand{\ineven}{\in_{\text{even}}}
\newcommand{\abs}[1]{\left\lvert #1 \right\rvert}
\newcommand{\set}[1]{\left\lbrace #1\right\rbrace}
\newcommand{\tup}[1]{\left( #1\right)}
\newcommand{\false}{\mathsf{False}}
\newcommand{\true}{\mathsf{True}}
\newcommand{\AP}{\mathtt{AP}}
\newcommand{\LTLnext}{\bigcirc}
\newcommand{\LTLeventually}{\lozenge}
\newcommand{\LTLalways}{\square}
\newcommand{\LTLuntil}{\ensuremath{~\mathcal{U}~}}
\newcommand{\lang}{\mathcal{L}}
\newcommand{\game}{\mathcal{G}}
\newcommand{\gamegraph}{\ensuremath{G}}
\newcommand{\vertex}{V}
\newcommand{\vertexi}{\vertex_i}
\newcommand{\play}{\ensuremath{\rho}}
\newcommand{\playprefix}{\play}
\newcommand{\priority}{\Omega}
\newcommand{\priorityset}[1]{\priority_{#1}}
\newcommand{\spec}{\ensuremath{\phi}}
\newcommand{\speci}{\spec_i}
\newcommand{\specj}{\spec_j}
\newcommand{\specnoti}{\spec_{-i}}
\newcommand{\specSE}{\varphi}
\newcommand{\specSEi}{\specSE_i}
\newcommand{\specSEnoti}{\specSE_{-i}}
\newcommand{\p}[1]{\ensuremath{\text{Player}~#1}}
\newcommand{\players}{\mathtt{P}}
\newcommand{\playersnoti}{\players_{-i}}
\newcommand{\env}{\mathtt{env}}
\newcommand{\win}{\mathcal{W}}
\newcommand{\team}[1]{\llangle #1\rrangle}
\newcommand{\teamall}{\team{\players}}
\newcommand{\strat}{\ensuremath{\pi}}
\newcommand{\strati}{\ensuremath{\strat_{i}}}
\newcommand{\stratj}{\ensuremath{\strat_{j}}}
\newcommand{\stratnoti}{\ensuremath{\strat_{-i}}}
\newcommand{\Strat}{\ensuremath{\Pi}}
\newcommand{\buchi}{\ifmmode B\ddot{u}chi \else B\"uchi \fi}
\newcommand{\cobuchi}{\ifmmode co\text{-}B\ddot{u}chi \else co-B\"uchi \fi}
\newcommand{\parity}{\ensuremath{\textit{Parity}}}
\newcommand{\colivegroup}{C}
\newcommand{\safegroup}{S}
\newcommand{\assump}{\ensuremath{\psi}}
\newcommand{\assumpi}{\assump_i}
\newcommand{\assumpnoti}{\assump_{-i}}
\newcommand{\assumptemp}[2]{\assump^{[#1,#2]}}
\newcommand{\assumpsafe}{\ensuremath{\assump_{\textsc{unsafe}}}}
 \newcommand{\assumpgrlive}{\ensuremath{\assump_{\textsc{cond}}}}
 \newcommand{\assumpcolive}{\ensuremath{\assump_{\textsc{colive}}}}
\newcommand{\computeGE}{\textsc{ComputeGE}}
\newcommand{\ocomputeGE}{\textsc{OComputeGE}}
\newcommand{\computeAPA}{\textsc{ComputeAPA}}
\newcommand{\computeUCA}{\textsc{ComputeUCA}}
\newcommand{\computeWin}{\textsc{ComputeWin}}
\newcommand{\recursiveGE}{\textsc{RecursiveGE}}
\newcommand{\approxAPA}{\textsc{ApproxAPA}}
\newcommand{\payoff}{\mathsf{payoff}}
\begin{document}
\title{Most General Winning Secure Equilibria Synthesis in Graph Games\thanks{Authors are supported by the DFG project 389792660 TRR 248-CPEC. 
Additionally A.-K.~Schmuck is supported by the DFG project SCHM 3541/1-1.}}

\author{Satya Prakash Nayak \and Anne-Kathrin Schmuck}
\authorrunning{S. P. Nayak and A. Schmuck}
\institute{Max Planck Institute for Software Systems, Kaiserslautern, Germany\\
\email{\{sanayak,akschmuck\}@mpi-sws.org}}

\maketitle              %
\begin{abstract}
This paper considers the problem of co-synthesis in $k$-player games over a finite graph where each player has an individual $\omega$-regular specification $\phi_i$. In this context, a secure equilibrium (SE) is a Nash equilibrium w.r.t.\ the lexicographically ordered objectives of each player to first satisfy their own specification, and second, to falsify other players' specifications.
A \emph{winning secure equilibrium} (WSE) is an SE strategy profile $(\pi_i)_{i\in[1;k]}$ that ensures the specification $\phi:=\bigwedge_{i\in[1;k]}\phi_i$ if no player deviates from their strategy $\pi_i$. Distributed implementations generated from a WSE make components \emph{act rationally} by ensuring that a deviation from the WSE strategy profile is \emph{immediately punished} by a \emph{retaliating strategy} that makes the involved players lose.

In this paper, we move from \emph{deviation punishment} in WSE-based implementations to a \emph{distributed, assume-guarantee based realization} of WSE. This shift is obtained by generalizing WSE from \emph{strategy profiles} to \emph{specification profiles} $(\varphi_i)_{i\in[1;k]}$ with $\bigwedge_{i\in[1;k]}\varphi_i = \phi$, which we call \emph{most general winning secure equilibria (GWSE)}. Such GWSE have the property that each player can individually pick a strategy $\pi_i$ winning for $\varphi_i$ (against all other players) and all resulting strategy profiles $(\pi_i)_{i\in[1;k]}$ are guaranteed to be a WSE. The obtained flexibility in players' strategy choices can be utilized for robustness and adaptability of local implementations. 

Concretely, our contribution is three-fold: (1) we formalize GWSE for $k$-player games over finite graphs, where each player has an $\omega$-regular specification $\phi_i$; (2) we devise an \emph{iterative semi-algorithm} for GWSE synthesis in such games, and (3) obtain an \emph{exponential-time algorithm} for GWSE synthesis with parity specifications $\phi_i$.
	
\keywords{Distributed Synthesis, Parity Games, Secure Equilibria, Assume-Guarantee Contracts}
\end{abstract}

\section{Introduction}\label{section:intro}

Games over graphs provide a well known abstraction for many challenging correct-by-construction synthesis problems for software and hardware in embedded cyber-physical applications. In particular, the correct-by-construction co-synthesis of multiple interacting (reactive) components -- each with its own correctness specification -- poses, as of today, severe challenges in automated system design. 

While many of these challenges arise from the fact that not every component has the same information about all relevant variables in the system, even in the seemingly simple setting of \emph{full information} -- where all components see the valuation to all variables -- finding the right balance between centralized and local reasoning for co-synthesis is surprisingly challenging. While assuming all players to cooperate might demand too much commitment from individual components, a fully adversarial setting where all other components are assumed to harm a local implementation (independently of their own objective) might not capture a realistic scenario either. 

To address this issue, starting with the seminal work of Chatterjee et al.~\cite{chatterjee_SE}, the concept of \emph{rationality} -- stemming from classical game theory -- was brought to graph games in order to formalize a more realistic model for interaction of multiple components in co-synthesis. The main conceptual contribution of \cite{chatterjee_SE} was the introduction of \emph{secure equilibria} (SE) -- a special sub-class of Nash equilibria -- given as particular strategy profiles. Intuitively, an SE is a Nash equilibrium w.r.t.\ the lexicographically ordered objectives of each player to first satisfy their own specification, and only second, to falsify other players' specifications. More specifically, it is a strategy profile, i.e., a tuple $(\strati)_{i}$, with $\strati$ being the strategy of $\p{i}$, such that no player can improve w.r.t.\ their lexicographically ordered objective by deviating from this strategy.

As stated by \cite[p.68]{chatterjee_SE}, an SE can thus be interpreted as a contract between the players which enforces cooperation: any unilateral selfish deviation by one player cannot put the other players at a disadvantage if they follow the SE. While this property makes SE very desirable, their main draw-back, as most prominently pointed out by \cite{admissible}, is their restriction to a \emph{single} strategy profile. This, in combination with classical reactive synthesis engines typically preferring small and goal-oriented strategies, incentivizes \enquote{immediate punishment} of deviations from an SE strategy profile in the final implementation.

\smallskip
\noindent\textbf{Motivating Example.}
\begin{figure}[b]
\centering
\vspace{-0.5cm}
\begin{tikzpicture}
	\node[player2] (v0) at (0, 0) {$v_0$};
    \node[player2,accepting] (v2) at (\hpos,0) {$v_2$};
    \node[player1,accepting] (v1) at (0, -\ypos) {$v_1$};
    \node[player1] (v3) at (\hpos, -\ypos) {$v_3$};
    \node[player1] (v4) at (2*\hpos, -\ypos) {$v_4$};
                    
    \path[->] (v2) edge[bend left =20] (v3) edge (v4);
    \path[->] (v0) edge (v2) edge (v3);
    \path[->] (v3) edge[bend left =20] (v2) edge (v1) edge (v4);
    \path[->] (v4) edge[loop right] ();
    \path[->] (v1) edge (v0);
\end{tikzpicture}
\caption{A two-player game with $\p{1}$'s vertices (squares), $\p{2}$'s vertices (circles) where $\p{i}$'s specification $\speci = \LTLalways\LTLeventually v_i$ is to visit $v_i$ infinitely often.}
\vspace{-1cm}
\label{fig:motivating-intro}
\end{figure}
To illustrate this effect, let us consider the game depicted in \cref{fig:motivating-intro}, taken from \cite{chatterjee_SE}.
Here, an SE can be described as follows: if $\p{1}$ always chooses $v_3\rightarrow v_1$ (forming $\strat_1$) and $\p{2}$ always chooses $v_0\rightarrow v_2$ and $v_2\rightarrow v_3$ (forming $\strat_2$), then they both satisfy their specifications; if $\p{1}$ deviates by choosing $v_3\rightarrow v_2$ (risking falsification of $\spec_2$), then $\p{2}$ can retaliate by choosing $v_2\rightarrow v_4$ (ensuring falsification of both $\spec_i$); similarly, if $\p{2}$ deviates by choosing $v_0\rightarrow v_3$ (risking falsification of $\spec_1$), then $\p{1}$ retaliate by choosing $v_3\rightarrow v_4$ (ensuring falsification of both $\spec_i$).
Clearly, the strategy profile $(\strat_1,\strat_2)$ is an SE. It is, in particular, a \emph{winning} SE as both players satisfy their specifications when following it. However, as the outlined retaliating strategies $(\strat'_1,\strat'_2)$ are also part of the final implementation generated from this SE, any play that deviates from $(\strat_1,\strat_2)$ \emph{only once}, makes the game end up in a loop at $v_4$ resulting in neither player satisfying their objectives. Intuitively, this way of implementing SE-based strategies makes components \emph{act rationally} by ensuring that a deviation from the contract is \emph{immediately punished}.

Having the interpretation of an SE as a contract in mind, it is however very appealing to think about the \emph{realization} of this contract in the final implementation in a more \emph{permissive} way. 
Intuitively, in the game depicted in \cref{fig:motivating-intro}, both players can satisfy their specifications $\spec_i$ without the help by the other player, as long as the play does not go to $v_4$. In particular, whenever both players independently chose a strategy $\pi_i$ which ensures that they (i) never take their edge to $v_4$ and (ii) satisfy $\spec_i$ for every strategy $\pi_{-i}$ of the other player that also never takes their edge to $v_4$, forms an SE strategy profile $(\strat_1,\strat_2)$. These \emph{minimal cooperation obligations} for an SE can be interpreted as a \emph{specification profile} $(\specSE_1,\specSE_2)$, s.t.\ $\specSE_1:=\assump_1\wedge(\assump_2\Rightarrow\spec_1)$ and $\specSE_2:=\assump_2\wedge(\assump_1\Rightarrow\spec_2)$, where $\assump_1 = \LTLalways\neg(v_3\wedge\LTLnext v_4)$ and $\assump_2 = \LTLalways\neg(v_2\wedge\LTLnext v_4)$ express the above discussed assumption that $\p{i}$ does not move to $v_4$ from their vertex.
It turns out, that this new \emph{specification profile} $(\specSE_1,\specSE_2)$ has three nice properties: (i) it is \emph{most general} meaning it does not lose any cooperative solution, i.e., $\spec_1\wedge\spec_2 = \specSE_1\wedge\specSE_2$ , (ii) it is \emph{realizable}, i.e., $\p{i}$ has a strategy $\strati$ that satisfies $\specSEi$ in a zero-sum sense, (i.e., no matter what the other player does) and, most importantly (iii) it is \emph{secure (winning)}, i.e., every strategy profile $(\strat_1,\strat_2)$, where $\p{i}$'s strategy $\strati$ satisfies $\specSEi$ (in a zero-sum sense) is a \emph{winning} SE.
While properties (i) and (iii) motivated us to call the set of new specifications a \emph{most general winning secure equilibrium} (GWSE), property (ii) ensures that any specification $\specSEi$ from this tuple is locally and fully independently realizable by every component. %
Conceptually, this allows us to move from \emph{deviation-punishment} in SE-based implementations to a \emph{distributed, assume-guarantee based realization} of SE.

\smallskip
\noindent\textbf{Contribution.}
By moving from \emph{strategy profiles} (WSE) to \emph{specification profiles} (GWSE) for SE realizations,  our approach takes the conceptualisation of rationality for distributed synthesis to an extreme: as we are in the position to \emph{design} every component (as it is a computer system not a human that actually acts rationally) we can enforce that implementations respect the new specifications $\specSEi$. We only use the \emph{concept of rationality} encoded in WSE to automatically obtain \emph{meaningful and implementable} distributed specifications $\specSEi$ for this co-design process. Thereby the implementation of an accompanying \emph{punishment mechanism} to enforce rationality of players becomes obsolete. 
The obtained flexibility in players' strategy choices can be utilized for robustness and adaptability of local implementations, which makes GWSE particularly suited for embedded systems applications.

Concretely, our contribution is three-fold:
\begin{inparaenum}[(1)]
 \item We formalize GWSE for $k$-player games over finite graphs, where each player has an $\omega$-regular specification.
 \item We devise an \emph{iterative semi-algorithm}\footnote{A semi-algorithm is an algorithm that is not guaranteed to halt on all inputs.} for GWSE synthesis under $\omega$-regular specifications.
 \item We give a (sound but incomplete) \emph{exponential-time algorithm} for GWSE synthesis under \emph{parity} specifications. %
\end{inparaenum}

\smallskip
\noindent\textbf{Other Related Work.}
After the introduction of \emph{secure equilibria} (SE) by Chatterjee et al.~\cite{chatterjee_SE}, there has been several efforts on extending the notion to other classes of games, e.g., games with sup, inf, lim sup, lim inf, and mean-payoff measures~\cite{BruyereMR14}, multi-player games with probabilistic transitions~\cite{PrilFKSV14} or quantitative reachability games~\cite{BrihayeBP14}.
Furthermore, a variant of secure equilibria, called \emph{Doomsday equilibria} was studied in~\cite{ChatterjeeDFR17}, where if any coalition of players deviates and violates one players' objective, then the objective of every player is violated.
Moreover, the notion of secure equilibria has been applied effectively in the synthesis of mutual-exclusion protocols~\cite{ChatterjeeR14,BloemCJK15} and fair-exchange protocols~\cite{KremerR03,LiLXHF14}.

Motivated by similar insights, other concepts of rationality have also been introduced in multi-player games, e.g.\ subgame perfect equilibria~\cite{Ummels06,BriceRB23,Steg22,BruyereRPR21,BriceRB21} or rational synthesis~\cite{FismanKL10,KupfermanS22,FiliotGR18}. 
Similar to the implementations of SE by \cite{chatterjee_SE}, these works restrict implementations to a \emph{single} strategy profile. In contrast, our work introduces a more flexible concept of rationality that is closely related to 
contract-based distributed synthesis, as in~\cite{MajumdarMSZ20,FinkbeinerP22,DammF14,AnandNS23}. 
Here, an assume-guarantee contract is synthesized, such that every strategy realizing the guarantee is ensured to win whenever the other players satisfy the assumption. While this is conceptually similar to our synthesis of GWSE, these works do not consider the players to be adversarial, and hence, there is no notion of \emph{equilibria}.

To the best of our knowledge, the only other work that also combines flexibility with equilibria is
\emph{assume-admissible (AA) synthesis}~\cite{admissible}. Their work utilizes a different, incomparable definition of rationality based on a dominance order. Both approaches are incomparable -- there exist co-synthesis problems where our approach successfully synthesizes a GWSE and no AA contract exists, and vice versa (see \cref{example:algo-admissible} for details). Conceptually, AA contracts still require \emph{rational} behaviour of players within the contract, while our approach only uses rationality as a \emph{concept} to synthesize meaningful local specifications which can then be implemented in an arbitrary (non-rational) manner. We believe that this is a superior strength of our approach compared to AA synthesis.

\section{Preliminaries}\label{section:prelims}

\noindent\textbf{Notation.}
We use $\mathbb{N}$ to denote the set of natural numbers including zero.
Given $a,b\in\mathbb{N}$ with $a<b$, we use $[a;b]$ to denote the set $\set{n\in\mathbb{N} \mid a\leq n\leq b}$.
For any given set $[a;b]$, we write $i\ineven [a;b]$ and $i\inodd [a;b]$ as short hand for $i\in [a;b]\cap \set{0,2,4,\ldots}$ and $i\in [a;b]\cap \set{1,3,5,\ldots}$ respectively.
For a finite alphabet $\Sigma$, $\Sigma^*$ and $\Sigma^\omega$ denote the set of finite and infinite words over $\Sigma$, respectively.

\smallskip
\noindent\textbf{Linear Temporal Logic (LTL).}
Given a finite set $\AP$ of atomic propositions, \emph{linear temporal logic (LTL)} formulas over $\AP$ are defined by the grammar:
\[\spec\coloneqq p\in\AP \mid \spec\vee\spec \mid \neg\spec \mid \LTLnext\spec \mid \spec\LTLuntil\spec,\]
where $\vee$, $\neg$, $\LTLnext$, and $\LTLuntil$ denotes the operators \emph{disjunction}, \emph{negation}, \emph{next}, and \emph{until}, respectively. 
Furthermore, we use the usual derived operators, $\true = p\vee\neg p$, $\false = \neg\true$, \emph{conjunction} $\spec\wedge\spec' = \neg(\neg\spec\vee\neg\spec')$, \emph{implication} $\spec\Rightarrow\spec' = \neg\spec\vee\spec'$, and other temporal operators such as \emph{finally} $\LTLeventually\spec = \true\LTLuntil\spec$ and \emph{globally} $\LTLalways\spec = \neg\LTLeventually\neg\spec$.
The semantics of LTL formulas are defined as usual (see standard textbooks~\cite{baier2008principles}).

\smallskip
\noindent\textbf{Game Graphs.}
A \emph{$k$-player (turn-based) game graph} is a tuple $\gamegraph= \tup{V,E,v_0}$ where $(V,E,v_0)$ is a finite directed graph with \emph{vertices} $ V $ and \emph{edges} $ E $, and $v_0\in V$ is an initial vertex.
For such a game graph, let $\players = [1;k] $ be the set of players such that  $V=\bigcup_{i\in\players} V_i$ is partioned into vertices of $k$ players in $\players$.
We write $E_i$, $i\in\players$, to denote the edges from $\p{i}$'s vertices, i.e., $E_i = E\cap(V_i\times V)$.
Further, we write $V_{-i}$ and $E_{-i}$ to denote the set $\bigcup_{j\neq i}V_j$ and $\bigcup_{j\neq i} E_j$, respectively.
A \emph{play} from a vertex $u_0$ is a finite or infinite sequence of vertices $\play=u_0u_1\ldots$ with $(u_j,u_{j+1})\in E$ for all $j\geq 0$. 

\smallskip
\noindent\textbf{Specifications.}
Given a game graph $\gamegraph$, we consider \emph{specifications} specified using a LTL formula $\Phi$ over the vertex set~$V$, that is, we consider LTL formulas whose atomic propositions are sets of vertices~$V$. 
In this case the set of desired infinite plays is given by the semantics of $\spec$ over $\gamegraph$, which is an $\omega$-regular language $\lang(\gamegraph,\spec)\subseteq V^\omega$. 
We just write $\lang(\spec)$ to denote this language when the game graph $\gamegraph$ is clear in the context.
Every game graph with an arbitrary $\omega$-regular set of desired infinite plays can be reduced to a game graph (possibly with an extended set of vertices) with an LTL objective, as above. 
The standard definitions of $\omega$-regular languages are omitted for brevity and can be found in standard textbooks~\cite{baier2008principles}. 
To simplify notation we use $ e=(u,v) $ in LTL formulas as syntactic sugar for $ u\wedge \bigcirc v $. 

\smallskip
\noindent\textbf{Games and Strategies.}
A \emph{$k$-player game} is a pair $\game=\tup{\gamegraph,(\spec_i)_{i\in\players }}$ where $\gamegraph$ is a $k$-player game graph and 
each $ \spec_i $ is an \emph{objective} for $\p{i}$ over $\gamegraph$.
A strategy of $\p{i},~i\in\players $, is a function $\strati\colon \vertex^*\vertexi\to \vertex$ such that for every $\playprefix v \in \vertex^*\vertexi$, it holds that $(v,\strati(\playprefix v))\in E$.
A strategy profile for a set of players $\players' \subseteq \players $ is a tuple $\Strat = (\strati)_{i\in\players'}$ of strategies, one for each player in $\players'$.
To simplify notation, we write $\playersnoti$ and $\stratnoti$ to denote the set $\players\setminus\{i\}$ and their strategy profile $(\stratj)_{j\in\players\setminus\{i\}}$, respectively.
Given a strategy profile $(\strati)_{i\in\players'}$, we say that a play $\play=u_0u_1\ldots$ is \emph{$(\strati)_{i\in\players'}$-play} if for every $i\in\players'$ and for all $\ell\geq 1$, it holds that $u_{\ell-1}\in \vertexi$ implies $u_{\ell} = \strati(u_0\ldots u_{\ell-1})$.

\smallskip
\noindent\textbf{Satisfying Specifications.}
Given a game graph $\gamegraph$ and a specification $\spec$, a play $\play$ \emph{satisfies} $\spec$ if $\play\in\lang(\spec)$.
A strategy profile $(\strati)_{i\in\players'}$ satisfies/winning w.r.t. a specification~$\spec$, from a vertex $v$, denoted by $(\strati)_{i\in\players'}\vDash_v \spec$, if every $(\strati)_{i\in\players'}$-play from $v$ satifies~$\spec$.
We just write $(\strati)_{i\in\players'}\vDash \spec$ if $v$ is the initial vertex.
We collect all vertices from which there exists a strategy profile for players in $\players'$ that satisfies $\spec$ in the winning region\footnote{Slightly abusing notation, we write $\team{i}\spec$ for singleton sets of players $\players'=\{i\}$.}~$\team{\players'}(\gamegraph,\spec)$. We just write $\team{\players'}\spec$ to denote this set if game graph $\gamegraph$ is clear in the context.
Furthermore, we write $\specnoti$ to denote $\bigwedge_{j\in\playersnoti}\spec_j$.

\smallskip
\noindent\textbf{Parity Specifications.} 
Give a game graph $\gamegraph = (V,E,v_0)$, a specification $\spec$ is called \emph{parity} if $\spec =\textstyle\parity(\priority)\coloneqq \bigwedge_{i\inodd [0;d]} \left(\LTLalways\LTLeventually \priorityset{i}\Rightarrow \bigvee_{j\ineven [i+1;d]} \LTLalways\LTLeventually \priorityset{j}\right)$, 
with $ \priorityset{i}=\{v\in V\mid \priority(v)=i \} $ for some priority function $ \priority: V\rightarrow [0;d] $ that assigns each vertex a priority. A play satisfies such a specification if the maximum of priorities seen infinitely often is even.

\section{Most General Winning Secure Equilibria}\label{section:equilibrium}
This section formalizes most general winning secure equilibria (GWSE). In order to do so, we first recall the notion of secure equilibria from \cite{chatterjee_SE}.

\smallskip
\noindent\textbf{Secure Equilibria.}
Given a \emph{$k$-player game} $\game=\tup{\gamegraph,(\spec_i)_{i\in\players}}$ and a strategy profile $\Strat:=(\strati)_{i\in\players}$ one can define a payoff profile, denoted by $\payoff(\Strat)$, as the tuple $(p_i)_{i\in\players}$ s.t.\ $p_i=1$ iff $\Strat\vDash\spec_i$. With this, we can define a $\p{j}$ preference order $\prec_j$ on payoff profiles lexicographiacally, s.t.\
\begin{equation*}
(p_i)_{i\in\players} \prec_j (p_i')_{i\in\players} \text{ iff } (p_j<p_j')\vee \big((p_j = p_j')\wedge(\forall i\neq j. p_i\geq p_i')\wedge (\exists i\neq j. p_i > p_i')\big).
\end{equation*}
Intuitively, this preference order captures the fact that every player's main objective is to satisfy their own specification $\spec_i$, and, as a secondary objective, falsify the specifications of the other players. 

\begin{definition}
 Given a \emph{$k$-player game} $\game=\tup{\gamegraph,(\spec_i)_{i\in\players }}$, a strategy profile $\Strat:=(\strati)_{i\in\players}$ is a \emph{secure equilibrium} (SE) if for all $i\in\players$, there does not exist a strategy $\strati'$ of $\p{i}$ such that $\payoff(\Strat)\prec_i \payoff(\strati',\stratnoti)$.
\end{definition}

It is well known that every secure equilibrium is also a nash equilibrium in the classical sense. Within this paper, we only consider \emph{winning secure equilibria} (WSE) i.e., SE with the payoff profile $(p_i=1)_{i\in\players}$.
As WSE have a trivial payoff profile, they can be characterized without referring to payoffs as formalized next. %

\begin{definition}\label{def:SE}
    Give a $k$-player game $(\gamegraph,(\spec_i)_{i\in\players })$, a \emph{winning secure equilibrium (WSE)} is a strategy profile $(\strati)_{i\in\players }$ such that
    \begin{inparaenum}[(i)]
        \item $(\strati)_{i\in\players } \vDash \bigwedge_{i\in\players } \speci$; and \label{item:def:SE1}
        \item for every strategy $\strati'$ of $\p{i}$, if 
        $(\strati',\strat_{-i})\not\vDash\spec_{-i}$ holds, then $(\strati',\strat_{-i})\not\vDash\speci$ holds.\label{item:def:SE2}
    \end{inparaenum}   
\end{definition}
Intuitively, \cref{item:def:SE1} ensures that the strategy profile satisfies all player's objective, whereas \cref{item:def:SE2} ensures that no player can improve, i.e., falsify another player's objective without falsifying their own objective, by deviating from the prescribed strategy.

\smallskip
\noindent\textbf{Most General Winning Secure Equilibria.}
As illustrated by the motivating example in \cref{section:intro}, we aim at generalizing WSE from single \emph{strategy profiles} to \emph{specification profiles} that capture an infinite number of WSE. These specification profiles $(\specSEi)_{i\in\players}$, which we call \emph{most general winning secure equilibria} (GWSE), allow each player to locally (and fully independently) pick a strategy~$\strati$ that is winning for $\specSEi$ (in a zero-sum sense). It is then guaranteed that any resulting strategy profile $(\strati)_{i\in\players }$ is indeed a WSE. This is formalized next.

\begin{definition}\label{def:equilibrium}
    Give a $k$-player game $(\gamegraph,(\spec_i)_{i\in\players })$, a tuple $(\specSEi)_{i\in\players }$ of specifications is said to be a \emph{most general winning secure equilibrium (GWSE)} if it is
    \begin{enumerate}[(i)]
        \item (most) general: $\lang(\bigwedge_{i\in\players }\specSEi) = \lang(\bigwedge_{i\in\players }\speci)$;\label{item:def:equilibrium:general}
        \item realizable: $v_0\in\team{i}\specSEi$ for all $i\in\players $; and \label{item:def:equilibrium:realizable}
        \item secure (winning): every strategy profile $(\strati)_{i\in\players }$ with $\strati\vDash\specSEi$ is a WSE.\label{item:def:equilibrium:stable}
    \end{enumerate}
\end{definition}
Intuitively, generality ensures that the transformation of the specifications $(\spec_i)_{i\in\players }$ into new specifications $(\specSEi)_{i\in\players }$ does not lose any winning play. Further, realizability ensures that every single player can enforce $\specSEi$ (without the help of other players) from the initial vertex. Finally, security ensures that any locally chosen strategy $\strati$ winning for $\specSEi$ fors a strategy profile which is indeed a WSE.

\section{Computing GWSE in $\omega$-regular Games}\label{section:GWSEforALL}

This section proposes an \emph{iterative semi-algorithm}\footnote{A semi-algorithm is an algorithm that is not guaranteed to halt on all inputs.} to compute GWSE in this paper which utilizes the concept of \emph{adequately permissive assumptions} (APA) introduced by Anand et al.~\cite{APA}. Given a $k$-player game $(\gamegraph,(\spec_i)_{i\in\players })$, an APA is a specification $\assump_i$ that collects all $\p{i}$ strategies which allow for a cooperative solution if other players cooperate. It therefore overapproximates the set of all $\p{i}$ strategies which could possibly form a WSE with the other players. As a consequence, the intersection $\bigwedge_{i\in\players}\assumpi$ is an overapproximation of a GWSE. In order to refine this approximation, the next computation round can now use the APA's of other players when computing new local APA's. %
In order to properly formalize this idea, we first recall the concept of APA's from~\cite{APA}.

\subsection{Adequately Permissive Assumptions}\label{subsec:APA}
Following~\cite{APA}, we define an \emph{adequately permissive assumption} (APA) as follows.

\begin{definition}\label{def:APA}
    Given a $k$-player game graph $\gamegraph = (V,E,v_0)$ and a specification~$\spec$, we say that  a specification $\assumpi$ is an \emph{adequately permissive assumption (APA)} on $\p{i}$ for $\spec$ if it is:
    \vspace{-0.08cm}
    \begin{enumerate}[(i)]
        \item \emph{sufficient}: there exists a strategy profile $\strat_{-i}$ such that for every $\p{i}$ strategy $\strati$ with $\strati\vDash\assumpi$, we have $(\strat_i,\strat_{-i})\vDash\spec$;
        \item \emph{implementable}: $\team{i}\assumpi = V$; and
        \item \emph{permissive}: $\lang(\assumpi)\supseteq\lang(\spec)$.
    \end{enumerate}
\end{definition}

The intuition behind an APA is that even if a player can not realize a specification $\spec$, they should at least satisfy an APA on them as it will allow them to realize $\spec$ if the other players are willing to help (sufficiency). Further, such a behavior by $\p{i}$ does not prevent any WSE (permissiveness), and $\p{i}$ can individually choose to follow an APA (implementability).

\begin{remark}\label{remark:APA}
 While \cref{def:APA} is an almost direct adaptation from \cite[Def. 2-5]{APA} to $k$-player games, it has a couple of noteable differences. %
First, Anand et al. define APA's for $2$-player games and, conceptually, use APA's to constraint the opponents moves. While we can simply view the $k$-player game as a $2$-player game between the protagonist $\p{i}$ and (the collection of) its opponents $\playersnoti$, we will use the computed assumption $\assump_i$ to constrain the \emph{protagonist's} moves (not the opponent) in \cref{def:APA}.
Second, the sufficiency condition for an APA in \cite[Def. 2]{APA} does not depend on an initial vertex. An APA always exists in their setting (possibly being $\true$ when $\teamall \spec = \emptyset$). In contrast, the $k$-player games in this paper have a designated initial vertex, %
hence, an APA only exists iff $v_0\in\teamall \spec$.
\end{remark}

With this insight, we can use the algorithm from \cite{APA} to compute APA's for parity specificatios $\spec=\parity(\priority)$ in polynomial time.

\begin{lemma}[{\cite[Thm. 4]{APA}}]\label{lemma:computeAPA}
    Given a $k$-player game graph $\gamegraph = (V,E,v_0)$ and a parity specification $\spec=\parity(\priority)$, an APA on $\p{i}$ for $\spec$ can be computed, if one exists, in time $\bigO(\abs{V}^4)$. 
\end{lemma}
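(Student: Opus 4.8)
The plan is to reduce the claim to the main result of Anand et al.~\cite{APA} by carefully translating between the two settings, as outlined in \cref{remark:APA}. First I would observe that a $k$-player turn-based game graph $\gamegraph = (V,E,v_0)$ with protagonist $\p{i}$ can be viewed as a $2$-player game graph by grouping all vertices in $V_{-i} = \bigcup_{j\neq i} V_j$ into the opponent's vertices, so $V = V_i \uplus V_{-i}$ with edges $E_i$ and $E_{-i}$ as before. Under this grouping, a strategy profile $\stratnoti$ for $\playersnoti$ is exactly a strategy for the opponent in the $2$-player game, and $\strati$ is a strategy for the protagonist; moreover the satisfaction relations $\vDash$ and the winning regions $\teami\spec$ coincide under this translation. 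Hence \cref{def:APA} specialized to this $2$-player game matches the three conditions (sufficiency, implementability, permissiveness) in \cite[Def. 2--5]{APA}, with the single caveat that here the assumption constrains the protagonist $\p{i}$ rather than the opponent. Since the roles of the two players in a turn-based game are symmetric as far as the computation of an APA is concerned (one simply swaps which partition class is treated as "controllable" when running the fixpoint), the algorithm of \cite{APA} applies verbatim to compute $\assumpi$ on $\p{i}$'s moves.

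Next I would address the only genuine discrepancy: the presence of a designated initial vertex $v_0$. In \cite{APA} an APA always exists (being $\true$ on the complement of $\teamall\spec$), whereas here sufficiency is stated relative to $v_0$, so an APA exists iff $v_0 \in \teamall\spec$. The fix is routine: run the \cite{APA} algorithm to obtain the vertex-indexed APA, and return it (declaring success) precisely when $v_0$ lies in the cooperative winning region $\teamall\spec = \teami(\gamegraph,\spec)$ of the grouped $2$-player game. Checking membership $v_0 \in \teamall\spec$ for a parity objective $\spec = \parity(\priority)$ is itself computable (e.g., via a cooperative parity game solve, which is just a parity game on the graph where all vertices are controllable, solvable in the stated polynomial bound), and the sufficiency, implementability, and permissiveness conditions of \cref{def:APA} then follow directly from the corresponding guarantees of \cite[Thm. 4]{APA} restricted to plays from $v_0$.

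Finally I would collect the complexity. By \cite[Thm. 4]{APA}, computing the APA for a parity specification on a game graph with vertex set $V$ takes $\bigO(\abs{V}^4)$ time; the additional cooperative reachability/parity check for $v_0 \in \teamall\spec$ is dominated by this bound; and the grouping of $V_{-i}$ into a single opponent is a linear-time syntactic operation that does not change $\abs{V}$ (we keep all vertices, only reinterpreting the partition). Hence the overall running time is $\bigO(\abs{V}^4)$, as claimed.

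I expect the main obstacle to be presentational rather than technical: one must state precisely how \cref{def:APA} (which constrains the \emph{protagonist}) corresponds to the opponent-constraining APA of \cite{APA}, and argue that this role swap is sound for turn-based games. This amounts to noting that the APA computation in \cite{APA} is a fixpoint that treats one partition class as the player whose strategies are being restricted, and that nothing in that computation breaks when we designate $V_i$ (rather than $V_{-i}$) as that class — the three APA properties are symmetric in the two players up to the direction of the implication in sufficiency. Once this correspondence is spelled out, the lemma is an immediate corollary of \cite[Thm. 4]{APA} together with the (cheap) initial-vertex check.
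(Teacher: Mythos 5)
Your proposal is correct and matches the paper's treatment: the paper gives no proof of this lemma, importing it directly from \cite[Thm.~4]{APA} and noting in \cref{remark:APA} exactly the two adaptations you spell out (viewing the $k$-player game as a $2$-player game with $\p{i}$ cast as the player whose moves the assumption constrains, and restricting existence to the case $v_0\in\teamall\spec$ because of the designated initial vertex). Your write-up is essentially an expansion of that remark into a full reduction, with the same complexity accounting.
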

Let us write $\computeAPA(\gamegraph, \spec, i)$ to denote the procedure that returns this APA if it exists; otherwise, it returns $\false$.

\begin{remark}\label{rem:nonparityAPA}
We note that $\cref{lemma:computeAPA}$ also gives a method to compute APA's for games with LTL- or $\omega$-regular specifications as such games can be converted into parity games (possibly with an extended game graph) by standard methods~\cite{baier2008principles}. Therefore, with a slight abuse of notation, we will also call the algorithm $\computeAPA(\gamegraph, \spec, i)$ if $\spec$ is not a parity specification, which the understanding, that the game is always converted into a parity game first. This might incur an exponential blowup of the state space. 
As we call $\computeAPA$ repeatedly to compute GWSE's, this blowup might cause non-termination (see Sec.~\ref{subsec:termination-tractability} for details). In order to obtain a (non-optimal but) terminating algorithm for GWSE computation, we will mitigate this blowup later in \cref{section:fastAlgo}.
\end{remark}

\subsection{Iterative Computation of APA's}\label{subsec:itterAPA}
Given the results of the previous section, we can use the algorithm $\computeAPA$ on a given game $(\gamegraph,(\spec_i)_{i\in\players })$ to compute APA's for each player, i.e., \linebreak
$\assumpi:=\computeAPA(\gamegraph, \spec_i, i)$. Intuitively, $\assumpi$ overapproximates the set of all $\p{i}$ strategies which could possibly form a WSE with the other players. As a consequence, the intersection $\bigwedge_{i\in\players}\assumpi$ is an overapproximation of the GWSE.

As outlined previously, we will iteratively refine these computed APA's to finally compute the GWSE. In order to do so, we want to condition the computation of the next-round APA $\assumpi'$ on the previous-round APA's of all other players $\assumpnoti$, as any secure strategy of players in $\playersnoti$ is incentivized to comply with $\assumpnoti$. The most intuitive method to do this is to simply consider $\assumpnoti\Rightarrow\spec_i$ as the specification for APA computation in the next round. However, the way sufficiency is formulated for APA's prevents this approach, as the implication $\assumpnoti\Rightarrow\spec_i$ is true if $\assumpnoti$ is false. As there obviously exists a strategy profile $\stratnoti$ which violates $\assumpnoti$, the sufficiency condition becomes meaningless for this specification. 

However, as we know that $\assumpnoti$ are APA's, their implementability constraint (\cref{def:APA}.ii) ensures that $\p{i}$ can neither enforce nor falsify them. Therefore, a new specification  $\spec_i':=\assumpnoti\wedge\spec_i$ still puts all the burden of satisfying $\assumpnoti$ to players in $\playersnoti$ and hence, implicitly constrains the choices of $\playersnoti$ to strategies complying with $\assumpnoti$ for sufficiency of the new APA. However, using  $\spec_i':=\assumpnoti\wedge\spec_i$ indeed weakens the permissiveness requirement $\lang(\assumpi')\supseteq\lang(\spec\wedge\assump_{-i})$, i.e., the new APA $\assumpi'$ needs to be more general than the specification $\spec$, only when the assumption $\assump_{-i}$ holds.
With these refined conditions for sufficiency and permissiveness, it becomes evident that an APA for specification $\spec$ under assumption $\assump_{-i}$ is equivalent to an APA for the modified specification $\assump_{-i}\wedge\spec$, as formalized below.

\begin{definition}\label{def:APAunderA}
    Given a $k$-player game, a specification $\spec_i$ and an assumption $\assumpnoti$, we say that the specification $\assumpi$ is an \emph{APA on $\p{i}$ for $\spec_i$ under $\assumpnoti$} if it is an APA on $\p{i}$ for specification $\assumpnoti\wedge\spec$.
\end{definition}
Following \cref{rem:nonparityAPA}, we denote by $\computeAPA(\gamegraph,\assump_{-i}\wedge\spec,i)$ the algorithm which computes \emph{APA's} on $\p{i}$ for $\spec$ \emph{under assumptions} $\assump_{-i}$, even though $\assump_{-i}\wedge\spec$ is typically not a parity specification over $\gamegraph$ anymore.

\subsection{Computing GWSE}\label{sec:GWSE}
Using all the intuition discussed before, we now give a semi-algorithm in \cref{algo:computeGE} to compute GWSE for $k$-player games with $\omega$-regular specifications for all players. 
The main idea is to iteratively compute assumptions $(\assumpi)_{i\in\players }$ on every player and check if they are stable enough so that every player can satisfy their actual specification~$\speci$ under the assumption $\assump_{-i}$.
If not, then, in the next iteration, we compute new assumptions $(\assumpi')_{i\in\players }$ that are stricter than earlier ones, i.e., $\lang(\assumpi')\subseteq\lang(\assumpi)$ but still more general than their specifications under the earlier assumption, i.e., $\lang(\assumpi')\supseteq\lang(\assump_{-i}\wedge\speci)$. 

\vspace{-0.5cm}
\begin{algorithm}
	\caption{$\computeGE(\game)$}\label{algo:computeGE}
	\begin{algorithmic}[1]
		\Require A $k$-player game $\game$ with game graph $\gamegraph = (V,E,v_0)$ and parity specifications $(\speci)_{i\in\players }$. 
		\Ensure Either a GWSE $(\specSEi)_{i\in\players }$ or $\false$.
        \State $\assumpi \gets \true$ $\forall i\in\players$\label{algo:computeGE:initial}
        \State \Return $\recursiveGE(\game,(\assumpi)_{i\in\players })$\label{algo:computeGE:rec}

        \Statex
		\Procedure {$\recursiveGE$}{$\game,(\assumpi)_{i\in\players }$}
        \State $\specSEi\gets \assumpi \wedge (\assump_{-i}\Rightarrow \speci)$ $\forall i\in\players$\label{algo:computeGE:specSEi}
        \If {$v_0\in \bigcap_{i\in\players }\team{i}\specSEi$}\label{algo:computeGE:if1}
            \State \Return $(\specSEi)_{i\in\players }$\label{algo:computeGE:terminate1}
        \EndIf
        \State $\assumpi' \gets \assumpi \wedge\computeAPA(\gamegraph,\assump_{-i}\wedge\speci,i)$ $\forall i\in\players$\label{algo:computeGE:assumpi'}
        \If {$\assumpi' = \assumpi$ for all $i\in\players $}\label{algo:computeGE:assumpi_eq}
        \State \Return $\false$
        \EndIf
        \State \Return $\recursiveGE(\game,(\assumpi')_{i\in\players })$
        \EndProcedure
	\end{algorithmic}
\end{algorithm}
\vspace{-0.5cm}

More specifically, we start with $\assumpi=\true$ for each $i\in\players $ in the first iteration (\cref{algo:computeGE:initial}), and then in every iteration, we want each player to satisfy $\specSEi = \assumpi\wedge(\assump_{-i}\Rightarrow\speci)$ (computed in \cref{algo:computeGE:specSEi}) \emph{by themselves}, i.e., always satisfy their assumption~$\assumpi$ and satisfy specification $\speci$ whenever others satisfy their assumptions~$\assump_{-i}$. Note that, in this part of the algorithm it is correct to use this implication-style specification, as it is used for solving a \emph{zero-sum $2$-player game} between $\p{i}$ and its opponent (i.e., the collection of all other players in $\playersnoti$) for the specification $\specSEi$. The winning regions $\team{i}\specSEi$ for each such zero-sum $2$-player game are then intersected in \cref{algo:computeGE:if1} to obtain the winning region that is achievable by any strategy profile $(\strati)_{i\in\players}$ where $\strati$ is a winning strategy of $\p{i}$ w.r.t.\ $\specSEi$ (in a zero-sum sense). If this resulting winning region contains the initial vertex, we return the specification $(\specSEi)_{i\in\players }$ (\cref{algo:computeGE:terminate1}), which is proven to indeed be a GWSE in \cref{thm:computeGE}.

If this is not the case, we keep on strengthening APA's, as discussed in \cref{subsec:itterAPA}, to make the above mentioned zero-sum 2-player games easier to solve (as they can rely on tighter assumptions now). Hence, we call $\computeAPA$ with the modified specifications $\speci':=\assump_{-i}\wedge\speci$ for all players (\cref{algo:computeGE:assumpi'}). If this assumption refinement step was unsuccessful, i.e., assumptions have not changed (\cref{algo:computeGE:assumpi_eq}), we give up and return $\false$. Otherwise, we recheck the termination condition for the newly computed APA's.

\begin{example}\label{example:algo-admissible}
\begin{figure}[b]
    \centering
    \begin{tikzpicture}
        \node[player2] (v0) at (0, 0) {$v_0$};
        \node[player1] (v1) at (\hpos, 0) {$v_1$};
        \node[player1] (v2) at (\hpos, 0.7*\ypos) {$v_2$};
        \node[player1] (v5) at (2*\hpos, 0) {$v_5$};
        \node[player1] (v3) at (-\hpos, 0) {$v_3$};
        \node[player2] (v4) at (-2*\hpos, 0) {$v_4$};
        
        \path[->] (0,-0.65) edge (v0.south); 
        \path[->] (v0) edge[bend left=20] (v1) edge[bend left=20] (v3) edge[loop above] ();
        \path[->] (v1) edge[bend left=20] (v0) edge (v2) edge (v5);
        \path[->] (v2) edge[loop right] ();
        \path[->] (v3) edge[bend left=20] (v0) edge (v4);
        \path[->] (v4) edge[loop above] ();
        \path[->] (v5) edge[loop above] ();
    \end{tikzpicture}
    \vspace{-0.2cm}
    \caption{A two-player game with initial vertex $v_0$, $\p{1}$'s vertices (squares), $\p{2}$'s vertices (circles) and specifications $\spec_1 = \LTLeventually\LTLalways \{v_5\}$ and $\spec_2 = \LTLeventually\LTLalways \{v_4,v_5\}$.}
    \label{fig:algo-admissible}
\end{figure}
Before proving the correctness of the (semi) \cref{algo:computeGE}, let us first illustrate the steps using an example depicted in \cref{fig:algo-admissible}.
In \cref{algo:computeGE:initial}, we begin with $\assump_1 = \assump_2 = \true$ and run the recursive procedure $\recursiveGE$ in \cref{algo:computeGE:rec}.

Within the first iteration of $\recursiveGE$, in \cref{algo:computeGE:specSEi}, we set $\specSEi = \speci$ as $\assumpi = \true$ for all $i\in[1;2]$.
Then, in \cref{algo:computeGE:if1}, we check whether each player can satisfy $\specSEi=\speci$ without cooperation (i.e., in a zero-sum sense), from the initial vertex $v_0$.
As no player can ensure that, we move to \cref{algo:computeGE:assumpi'}.
Here, as $\assumpi = \true$ for $i\in[1;2]$, the new assumptions $\assumpi'$ is an APA computed by $\computeAPA(\gamegraph,\speci,i)$.
This gives us $\assump_1' = \LTLalways\neg (e_{12}\wedge e_{34})\wedge\LTLeventually\LTLalways\neg e_{10}$ and $\assump_2' = \LTLeventually\LTLalways\neg e_{00}$, where $e_{ij} = v_i\wedge\LTLnext v_j$.
Intuitively, $\assump_1'$ ensures that edges, i.e., $v_1\rightarrow v_2$ and $v_3\rightarrow v_4$, leading to the region from which it is not possible to satisfy $\spec_1$ are never taken; and the edge, i.e., $v_1\rightarrow v_0$, restricting the play to progress towards target vertex $v_5$ (as in $\spec_1$) is eventually not taken. Similarly, $\assump_2$ ensures that the edge $v_0\rightarrow v_0$ is eventually not taken that ensures progress towards $\spec_2$'s target vertices $\{v_4,v_5\}$.
As $\assumpi'\neq\assumpi$ for all $i\in[1;2]$ in \cref{algo:computeGE:assumpi_eq}, we go to the next iteration of $\recursiveGE$. %

In the second iteration, we again compute the new potential GWSE $(\specSE_1,\specSE_2)$ with $\specSEi = \assumpi\wedge(\assumpnoti\Rightarrow\speci)$ in \cref{algo:computeGE:specSEi}. 
In \cref{algo:computeGE:if1}, we find that $v_0\not\in\team{1}\specSE_1$. That is because $\p{1}$ cannot ensure satisfying $\spec_1$ even when $\p{2}$ satisfies $\assump_2$ as $\p{2}$ can always use edge $v_0\rightarrow v_3$ leading to the play $(v_0v_3)^\omega\not\vDash\spec_2$.
Hence, in \cref{algo:computeGE:assumpi'}, the APA under $\assump_1$ gives a more restricted assumptions on $\p{2}$: $\assump_2' = \LTLeventually\LTLalways\neg(e_{00}\wedge e_{03})$.
As the assumption $\assump_2$ on $\p{2}$ was very weak, the APA for $\p{1}$ under $\assump_2$ results in the same assumption as $\assump_1$, and hence, $\assump_1' = \assump_1$. Then, we move to the third iteration.

In this iteration, we find that both players can indeed satisfy their new specification $\specSEi$ from the initial vertex in \cref{algo:computeGE:if1}. Hence, we finally return a GWSE $(\specSE_1,\specSE_2)$ with $\specSE_i = \assumpi\wedge (\assumpnoti\Rightarrow\speci)$ where $\assump_2 = \LTLeventually\LTLalways\neg(e_{00}\wedge e_{03})$ and $\assump_1 = \LTLalways\neg(e_{12}\wedge e_{34})\wedge\LTLeventually\LTLalways\neg(e_{10})$.
\end{example}

\begin{remark}\label{rem:AAsynth}
    Let us remark that for the game depicted in \cref{fig:algo-admissible}, assume-admissible (AA) synthesis~\cite{admissible} has no solution. AA-synthesis utilizes a different, incomparable definition of rationality based on a dominance order. In their framework, a $\p{i}$ strategy $\strati$ is said to be \emph{dominated} by $\strati'$ if the set of strategy profiles that $\strat'$ is winning against (i.e., satisfies $\p{i}$'s specification) is strictly larger than that of $\strat$. A strategy not dominated by any other strategy is called \emph{admissible}.     
    In AA-synthesis, one needs to find an admissible strategy $\strati$ for $\p{i}$ such that for every admissible strategy $\stratnoti'$ for the other player, $(\strati,\stratnoti')\vDash\speci$. In this example, $\p{1}$ has only one admissible strategy $\strat_1$ that always uses $v_1\rightarrow v_5$ and $v_3\rightarrow v_0$. However, with the admissible strategy $\strat_2'$ of $\p{2}$ that always uses $v_0\rightarrow v_3$, we have $(\strat_1,\strat_2')\not\vDash\spec_1$.
\end{remark}

The next theorem shows that \cref{algo:computeGE} is indeed sound.

\begin{restatable}{theorem}{restatecomputeGE}\label{thm:computeGE}
    Let $\game$ be a $k$-player game with game graph $\gamegraph = (V,E,v_0)$ and parity specifications $(\speci)_{i\in\players }$ such that $(\specSEi^*)_{i\in\players } = \computeGE(\game)$, then $(\specSEi^*)_{i\in\players }$ is a GWSE for $\game$.
\end{restatable}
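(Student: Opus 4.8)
The plan is to verify the three defining properties of a GWSE (Definition~\ref{def:equilibrium}) for the output $(\specSEi^*)_{i\in\players}$, using the invariants that the recursive procedure $\recursiveGE$ maintains on the tuple $(\assumpi)_{i\in\players}$ along every call. Since $\computeGE$ only returns a nontrivial answer through \cref{algo:computeGE:terminate1}, I may assume we are in an iteration where $\assumpi' = \assumpi$ for all $i$ is \emph{not} what triggered return, but rather $v_0\in\bigcap_{i\in\players}\team{i}\specSEi$ with $\specSEi = \assumpi\wedge(\assumpnoti\Rightarrow\speci)$ and $\specSEi^* = \specSEi$.

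\textbf{Key invariants to establish first.} By induction on the recursion depth I would prove that at the start of every call to $\recursiveGE(\game,(\assumpi)_{i\in\players})$: (a) each $\assumpi$ is an APA on $\p{i}$ for $\assumpnoti\wedge\speci$ — equivalently, by \cref{def:APAunderA}, an APA on $\p{i}$ for $\speci$ under $\assumpnoti$ — where for the base case $\assumpi=\true$ this is just the statement that $\true$ is an APA (sufficiency follows from $v_0\in\teamall\phi$ being implicitly assumed, and implementability/permissiveness of $\true$ are trivial); (b) permissiveness accumulates correctly so that $\lang(\assumpi)\supseteq\lang(\bigwedge_{j\in\players}\spec_j)$ — this uses that the fresh factor $\computeAPA(\gamegraph,\assumpnoti\wedge\speci,i)$ is permissive w.r.t.\ $\assumpnoti\wedge\speci$, hence w.r.t.\ $\bigwedge_j\spec_j$, and that the running conjunction $\assumpi\wedge(\cdot)$ preserves a superset of $\lang(\bigwedge_j\spec_j)$; (c) implementability is preserved: $\team{i}\assumpi = V$, using that a conjunction of $\p{i}$-implementable specifications is $\p{i}$-implementable (the implementability of an APA only restricts $\p{i}$'s own edges, so two such assumptions can be jointly enforced). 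The only subtle point here is (c) for the conjunction $\assumpi\wedge\computeAPA(\ldots)$; I expect this to follow from the structural form of APAs produced by the construction of~\cite{APA} (they are conjunctions of safety and live-edge conditions on $\p{i}$'s edges only), and I would cite that explicitly.

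\textbf{Verifying the three GWSE properties.} For \emph{generality} (item~\ref{item:def:equilibrium:general}): $\lang(\bigwedge_i\specSEi^*)\subseteq\lang(\bigwedge_i\speci)$ because on a play satisfying all $\specSEi^* = \assumpi\wedge(\assumpnoti\Rightarrow\speci)$ we have, for each $i$, that $\assumpj$ holds for all $j$ (hence $\assumpnoti$ holds), so the implication forces $\speci$. The reverse inclusion $\lang(\bigwedge_i\speci)\subseteq\lang(\bigwedge_i\specSEi^*)$ follows from invariant (b): a play satisfying all $\speci$ satisfies $\bigwedge_j\spec_j$, hence lies in $\lang(\assumpi)$ for each $i$ by permissiveness, and also satisfies $\assumpnoti\Rightarrow\speci$ trivially since $\speci$ holds; so it satisfies each $\specSEi^*$. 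For \emph{realizability} (item~\ref{item:def:equilibrium:realizable}): this is exactly the guard at \cref{algo:computeGE:if1}, $v_0\in\bigcap_i\team{i}\specSEi^*$, which held when we returned.

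\textbf{Security — the main obstacle.} For item~\ref{item:def:equilibrium:stable} I must show any profile $(\strati)_{i\in\players}$ with $\strati\vDash\specSEi^*$ is a WSE, i.e.\ satisfies (i) $(\strati)_i\vDash\bigwedge_i\speci$ and (ii) for every $i$ and every deviation $\strati'$, $(\strati',\stratnoti)\not\vDash\specnoti \Rightarrow (\strati',\stratnoti)\not\vDash\speci$. Part (i) is immediate from generality plus the fact that the unique $(\strati)_i$-play satisfies every $\specSEi^*$ (since $\strati\vDash\specSEi^*$ and the play is consistent with $\strati$), so it satisfies $\bigwedge_i\specSEi^* \equiv \bigwedge_i\speci$. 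Part (ii) is the crux: fix $i$ and a deviation $\strati'$, and suppose $(\strati',\stratnoti)\vDash\speci$; I must derive $(\strati',\stratnoti)\vDash\specnoti$. The argument is that the players in $\playersnoti$ still play $\stratnoti$ with $\stratj\vDash\specSEj^*$ for $j\neq i$, so on \emph{every} play consistent with $\stratnoti$ — in particular the $(\strati',\stratnoti)$-play — each $\assumpj$ holds (because $\assumpj$ is $\p{j}$-implementable and $\stratj$ realizes it — here I need that $\stratj\vDash\specSEj^*$ implies $\stratj\vDash\assumpj$, which holds since $\specSEj^* = \assumpj\wedge(\ldots)$). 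Hence $\assumpnoti = \bigwedge_{j\neq i}\assumpj$ holds on that play. Now I invoke \emph{sufficiency} of $\assumpi$ as an APA for $\assumpnoti\wedge\speci$ (invariant (a)): there is a witnessing profile $\strat_{-i}^{\mathrm{suf}}$, but more usefully I want the dual reading — I think the right move is: for each $j\neq i$, apply sufficiency of $\assumpj$ (APA for $\assump_{-j}\wedge\spec_j$) to conclude that since $\assump_{-j}$ holds on this play (it includes $\assumpi$, which holds because $\strati'$... no — $\strati'$ need not satisfy $\assumpi$). This is exactly where care is needed: a deviating $\strati'$ need \emph{not} respect $\assumpi$. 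I would resolve this by observing that $\assumpi$ is \emph{$\p{i}$-implementable}, so $\p{i}$ deviating cannot be what makes $\assumpi$ fail in a way that matters — more precisely, I expect the paper's argument to use that on the $(\strati',\stratnoti)$-play either $\assumpi$ holds (and then sufficiency/APA properties of the $\assumpj$'s for $j \neq i$, combined with $\stratnoti$ realizing them against $\p{i}$, give $\specnoti$), or $\assumpi$ fails — and permissiveness forces, via $\lang(\assumpi) \supseteq \lang(\speci)$... wait, that gives the contrapositive: if $\assumpi$ fails then $\speci$ fails, contradicting $(\strati',\stratnoti)\vDash\speci$. So $\assumpi$ must hold on that play. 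Then $\assumpnoti\wedge\assumpi$ holds, and applying sufficiency of each $\assumpj$ ($j\neq i$) — whose sufficiency witness, intersected with $\stratnoti$'s actual play under $\assump_{-j}$-compliance, yields $\assump_{-j}\wedge\spec_j$ and in particular $\spec_j$ — gives $\bigwedge_{j\neq i}\spec_j = \specnoti$. I will need to phrase this cleanly, being careful that sufficiency is about the \emph{existence} of a cooperating profile rather than a statement about $\stratnoti$ directly; the reconciliation is that $\stratj\vDash\specSEj^*$ means $\stratj$ \emph{does} enforce $\assump_{-j}\Rightarrow\spec_j$ by itself, so once $\assump_{-j}$ is known to hold on the realized play (which we just argued), $\spec_j$ follows directly from $\stratj\vDash\specSEj^*$ — sufficiency of the APAs is not even needed for part (ii), only for a well-definedness / non-vacuity check. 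I expect this last simplification is the clean way the authors present it, and getting the quantifier order right in "$\assump_{-j}$ holds on the realized play, hence $\spec_j$ holds" is the single delicate step of the whole proof.
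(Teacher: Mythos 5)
Your proposal is correct and follows essentially the same route as the paper: the two inductive invariants you identify (your (b) is the paper's claim~1, and the permissiveness part of your (a) is exactly the paper's claim~2, $\lang(\assumpi)\supseteq\lang(\assump_{-i}\wedge\speci)$), the same derivation of generality and realizability, and the same pivot for security — all $\assump_j$, $j\neq i$, hold on the deviated play because each $\strat_j$ enforces $\specSE_j^*$, and then $\speci$ holding forces $\assumpi$ to hold, after which each guarantee $\assump_{-j}\Rightarrow\spec_j$ fires. Your closing observation that sufficiency of the APAs is never needed, only the language inclusions, is precisely how the paper argues (and is what later lets them substitute UCA overapproximations). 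Two small imprecisions: the inclusion you momentarily invoke, $\lang(\assumpi)\supseteq\lang(\speci)$, is not an invariant of the iteration (only the first-round APA satisfies it); the correct step is that the play already satisfies $\assump_{-i}\wedge\speci$, so claim~2 yields $\assumpi$ — same conclusion, right citation. Also, you silently assume $\computeAPA$ never returns $\false$; the paper spends its first paragraph showing that if it did, the algorithm would propagate $\false$ and return $\false$ within two iterations, contradicting the theorem's premise, which is needed for your induction to be well-posed. Finally, you need not (and should not) claim that the accumulated conjunction $\assumpi$ is itself a full APA — its sufficiency and implementability are neither established nor used.
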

\begin{proof}
    First, observe that $\computeGE$ did not return $\false$ by the premise of the theorem. So, if $\computeAPA$ returned $\false$ in \cref{algo:computeGE:assumpi'}, i.e., $\assumpi'=\false$ for some $i\in\players$, in some $n$-th iteration, then in the $n+1$-th iteration, we have $\assumpi = \false$ and $\assump_{-j} = \false$ for all $j\in\playersnoti$. So, it holds that $v_0\not\in\team{i}\specSEi = \team{i}\false = \emptyset$ and hence, it does not return in \cref{algo:computeGE:terminate1}.
    Furthermore, as $\assump_{-j}\wedge\specj = \false$ for all $j\in\playersnoti$, by sufficiency, $\computeAPA$ returns $\false$ for all $j\in\playersnoti$. Hence, $\assump_j' = \false$ for all $j\in\players$. This would imply (by similar arguments), in $(n+2)$-th iteration, $\assump_j' = \assump_j = \false$ for all $j\in\players $ and hence, the algorithm would return $\false$. Therefore, we can assume $\computeAPA$ never returned $\false$ in any iteration.

    Now, let us claim that in every iteration of $\recursiveGE$, for all $i\in\players$:
    \begin{align*}
        \text{(claim 1)}\quad\lang(\assumpi)\textstyle\supseteq\lang(\bigwedge_{j\in\players }\spec_j),~\text{and}%
        \quad&&\quad
        \text{(claim 2)}\quad\lang(\assumpi)\supseteq\lang(\assump_{-i}\wedge\speci).%
    \end{align*}
    We will prove the claim using induction on the number of itereative calls to $\recursiveGE$. 
    For the base case, observe $\assumpi = \true$ for all $i\in\players $, hence, the claim holds trivially.
    For the induction step, assume that
    claim 1+2 hold in the $n$-th iteration. Then, for all $i\in\players $, as $\assumpi'$ (computed in \cref{algo:computeGE:assumpi'}) is $\assump$ in the next iteration, it suffices to show that $\lang(\assumpi')\supseteq\lang(\bigwedge_{j\in\players }\spec_j)$ and $\lang(\assumpi')\supseteq\lang(\assump'_{-i}\wedge\speci)$. %
    
    By permissiveness of APA (as in \cref{def:APA}), for all $i\in\players $, we have \linebreak
    $\lang(\computeAPA(\gamegraph,\assump_{-i}\wedge\speci,i))\supseteq \lang(\assump_{-i})\cap\lang(\speci)$.
    Hence, by \cref{algo:computeGE:assumpi'}, for all $i\in\players $, we have $\lang(\assumpi') \supseteq \lang(\assumpi)\cap \lang(\assump_{-i})\cap\lang(\speci) = \left(\bigcap_{j\in\players }\lang(\assump_j)\right)\cap\lang(\speci)$, and hence, by claim 1, $\lang(\assumpi')\supseteq \lang(\bigwedge_{j\in\players }\spec_j)$.

    Similarly, for all $i\in\players $, as $\lang(\assumpi') \supseteq \lang(\assumpi)\cap \lang(\assump_{-i})\cap\lang(\speci)$, by claim 2, we also have $\lang(\assumpi')\supseteq\lang(\assump_{-i})\cap\lang(\speci)$.
    Furthermore, by \cref{algo:computeGE:assumpi'}, for all $j\in\players $, we have $\lang(\assump_j)\supseteq\lang(\assump_j')$, and hence, $\lang(\assump_{-i}) = \bigcap_{j\neq i}\lang(\assump_j) \supseteq \bigcap_{j\neq i}\lang(\assump'_j) = \lang(\assump'_{-i})$.
    Therefore, for all $i\in\players $, we have $\lang(\assumpi')\supseteq\lang(\assump'_{-i})\cap\lang(\speci) = \lang(\assump'_{-i}\wedge\speci)$.
    
    Now, we show that \cref{def:equilibrium} (i)-(ii) indeed holds for the tuple $(\specSEi^*)_{i\in\players }$. %
    
    \noindent\textit{(i) (general)} By construction, $\specSEi^* = \assumpi \wedge (\assump_{-i}\Rightarrow \speci)$ for the specifications $(\assumpi)_{i\in\players }$ computed in last iteration.
    Hence, it holds that 
    \begin{align*}
        &\lang\left(\bigwedge_{i\in\players }\specSEi^*\right) = \bigcap_{i\in\players }\lang(\assumpi \wedge (\assump_{-i}\Rightarrow \speci))
        = \bigcap_{i\in\players }\lang(\assumpi) \cap \bigcap_{i\in\players }\lang(\assump_{-i}\Rightarrow \speci)\\
        \quad&= \bigcap_{i\in\players }\lang(\assump_{-i}) \cap \bigcap_{i\in\players }\lang(\assump_{-i}\Rightarrow \speci)
        = \bigcap_{i\in\players }\lang(\assumpnoti\wedge(\assumpnoti\Rightarrow\speci))
        \subseteq \bigcap_{i\in\players }\lang(\speci) = \lang\left(\bigwedge_{i\in\players }\speci\right).
    \end{align*}
    For the other direction, it holds that 
    \begin{align}
        \lang(\specSEi^*) = \lang(\assumpi)\wedge \lang(\assump_{-i}\Rightarrow\speci)
        \supseteq \lang(\assumpi)\cap\lang(\speci)\label{eq:proof:computeGE-general}
    \end{align}
    Then, by claim 1, for all $i\in\players$, we have $\lang(\specSEi^*)\supseteq \lang\left(\bigwedge_{i\in\players }\speci\right)$, and hence, $\lang\left(\bigwedge_{i\in\players }\specSEi^*\right)\supseteq\lang\left(\bigwedge_{i\in\players }\speci\right)$.
    Therefore, $(\specSEi^*)_{i\in\players }$ is general.

    \noindent\textit{(ii) (realizable)} Holds trivially by \cref{algo:computeGE:if1}.

    \noindent\textit{(iii) (secure)} Let $(\strati)_{i\in\players }$ be a strategy profile with $\strati\vDash\specSEi^*$.
    Then, every $(\strati)_{i\in\players }$-play from $v_0$ satisfies $\specSEi^*$ for all $i\in\players $, and hence, $(\strati)_{i\in\players }\vDash\bigwedge_{i\in\players }\specSEi^*$.
    So, by generality, we have $(\strati)_{i\in\players }\vDash \bigwedge_{i\in\players }\speci$.

    Now, to prove \cref{item:def:SE2} of \cref{def:SE}, let $\strati'$ be a strategy of $\p{i}$, and let $\play$ be the $(\strati',\strat_{-i})$-play from $v_0$.
    As before, for all $j\in\players $, we have $\specSE_j^* = \assump_j \wedge (\assump_{-j}\Rightarrow \spec_j)$. 
    So, for every $j\neq i$, $\play\in\lang(\specSE^*_j)\subseteq \lang(\assump_j)$.
    Hence, we have $\play \in \bigcap_{j\neq i}\lang(\assump_j) = \lang(\assump_{-i})$.

    Now, if $\play\in\lang(\speci)$, then $\play\in\lang(\assump_{-i}\wedge\speci)$. Then, by \cref{eq:proof:computeGE-general} and claim 2, we have $\play\in\lang(\specSEi^*)$.
    Furthermore, as $\stratnoti\vDash\specSEnoti^*$, we have $\play\in\lang(\specSEnoti^*)$.
    Therefore, $\play\in\lang(\specSEi^*\wedge\specSE^*_{-i})$, and by generality, $\play\in\lang(\speci\wedge\spec_{-i})\subseteq \lang(\spec_{-i})$.
    Then, by contraposition, \cref{item:def:SE2} of \cref{def:SE} holds for $(\strati)_{i\in\players }$.
    Hence, $(\strati)_{i\in\players }$ is an SE, and hence, $(\specSEi^*)_{i\in\players }$ is secure.
\qed
\end{proof}

\subsection{Games with an Environment Player}\label{subsec:gamesEnv}
Up to this point, we have only considered games played between $k$ players, each representing a distinct system.
However, in the context of reactive synthesis problems, a different setup is often encountered. Here, the system players play against an environment player, who is considered as being adversarial toward all the system players. Consequently, the system players must fulfill their objectives against all possible strategies employed by the environment player.

Interestingly, this framework can be seen as equivalent to a $(k+1)$-player game with the original $k$ system players and a $(k+1)$-th player, representing the environment. For this new player, the objective is simply $\spec_{k+1} = \true$.
Then, it is easy to see that an APA for such specification $\spec_{k+1}$ under any assumption is $\true$.
Hence, in each iteration of $\recursiveGE$ in \cref{algo:computeGE}, the associated assumption $\assump_{k+1}$ is also $\true$, and thus, $\specSE_{k+1} = \true\wedge((\bigwedge_{i\in[1;k]}\assumpi) \Rightarrow\true) \equiv \true$.
Consequently, if $\computeGE$ yields a GWSE $(\specSEi^*)_{i\in[1;k+1]}$, the new objective of the environment player, $\specSE_{k+1}^* = \true$, doesn't impose any constraints on the environment's actions.
Therefore, the tuple $(\specSEi^*)_{i\in[1;k]}$ remains secure (as in \cref{def:equilibrium}) for the $k$ system players because the environment player can never violate its new specification $\specSE_{k+1}$. In sum, games featuring an environment player can be effectively handled as a special case, as formally summarized below:
\begin{corollary}\label{corollary:gamesEnv}
    Let $\gamegraph = (V,E)$ be a game graph with $k$ system players, i.e., $\players = [1;k]$, and an environment player $\env$ such that $V = \left(\bigcup_{i\in\players} V_i\right) \uplus V_{\env}$.
    Let $(\speci)_{i\in \players}$ be the tuple of specifications, one for each system player.
    Then, a tuple $(\specSEi)_{i\in\players}$ is a GWSE for $(\gamegraph,(\speci)_{i\in\players})$ if and only if $(\specSEi)_{i\in[1;k+1]}$ with $\specSE_{k+1} = \true$ is a GWSE for the $k+1$-player game $(\gamegraph,(\speci)_{i\in[1;k+1]})$ with $\spec_{k+1} = \true$.
\end{corollary}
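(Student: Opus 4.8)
The plan is to prove \cref{corollary:gamesEnv} by a direct reduction that checks each of the three defining properties of a GWSE (\cref{def:equilibrium}) against one another in the two games, exploiting the triviality of $\spec_{k+1}=\true$. Throughout, write $\game$ for the $k$-player game $(\gamegraph,(\speci)_{i\in\players})$ viewed with the environment vertices $V_{\env}$ folded in as belonging to some adversarial super-player, and $\game'$ for the $(k+1)$-player game $(\gamegraph,(\speci)_{i\in[1;k+1]})$ with $\spec_{k+1}=\true$. The single observation that drives everything is: in $\game'$, the component $\specSE_{k+1}$ of any GWSE is logically equivalent to $\true$ — this follows because $\team{k+1}\true = V$ always, so realizability (\cref{def:equilibrium}.\ref{item:def:equilibrium:realizable}) is free, and generality (\cref{def:equilibrium}.\ref{item:def:equilibrium:general}) forces $\lang(\bigwedge_{i\in[1;k+1]}\specSEi) = \lang(\bigwedge_{i\in[1;k]}\speci)$, so $\specSE_{k+1}$ can be weakened to $\true$ without affecting the conjunction (more precisely: given any GWSE of $\game'$, replacing $\specSE_{k+1}$ by $\true$ keeps all three properties, since the conjunction $\bigwedge_i \specSEi$ only shrinks by nothing — it is unchanged as a language because $\lang(\bigwedge_{i\le k}\specSEi) \supseteq \lang(\bigwedge_{i\le k}\speci) = \lang(\bigwedge_{i\le k+1}\specSEi)$ and also $\subseteq$ trivially). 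Hence without loss of generality $\specSE_{k+1}=\true$ in either direction.

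For the forward direction, assume $(\specSEi)_{i\in\players}$ is a GWSE for $\game$; I would show $(\specSEi)_{i\in[1;k]}$ together with $\specSE_{k+1}=\true$ is a GWSE for $\game'$. Generality: $\lang(\bigwedge_{i\in[1;k+1]}\specSEi) = \lang(\true \wedge \bigwedge_{i\in[1;k]}\specSEi) = \lang(\bigwedge_{i\in[1;k]}\speci) = \lang(\bigwedge_{i\in[1;k+1]}\speci)$, using $\spec_{k+1}=\true$ on the right. Realizability: for $i\le k$, $v_0\in\team{i}\specSEi$ carries over verbatim since the two games have the same underlying graph and $\p i$'s vertices/edges are identical — an individual player's winning region against "everyone else" is the same whether or not the non-$i$ vertices are grouped as one environment or split among $k$ players, because a strategy profile of $\playersnoti$ in one game is just a (coalition) strategy of the opponent in the other; for $i=k+1$, $v_0\in\team{k+1}\true=V$ trivially. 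Security: let $(\strati)_{i\in[1;k+1]}$ be a profile with $\strati\vDash\specSEi$ for all $i$. The environment's strategy $\strat_{k+1}$ is unconstrained (it only needs $\vDash\true$). I need to check the two conditions of \cref{def:SE} for $\game'$. For \ref{item:def:SE1}: $(\strati)_{i}\vDash\bigwedge_{i\le k+1}\specSEi = \bigwedge_{i\le k}\speci \wedge \true$, which follows from security of $(\specSEi)_{i\le k}$ in $\game$ applied to the sub-profile $(\strati)_{i\le k}$ against the residual opponent (here is where I fold $\strat_{k+1}$ into the environment opponent of $\game$). For \ref{item:def:SE2}: take any $\strati'$ of a player $i\in[1;k+1]$; if $i=k+1$, then $\spec_{-i}=\bigwedge_{j\le k}\specj$ and any deviation by the environment still leaves the system profile $(\strat_j)_{j\le k}$ in place, so — since in $\game$ these form a WSE and WSE condition \ref{item:def:SE2} there covers exactly the environment's deviations — the implication holds; if $i\le k$, the condition is inherited from the WSE property of $(\strat_j)_{j\le k}$ in $\game$, noting that $\spec_{-i}$ in $\game'$ equals $\bigwedge_{j\in[1;k],j\ne i}\specj \wedge \true$, matching $\spec_{-i}$ in $\game$ up to the inert $\true$ conjunct.

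For the converse, assume $(\specSEi)_{i\in[1;k+1]}$ is a GWSE for $\game'$ with $\spec_{k+1}=\true$; by the opening observation take $\specSE_{k+1}\equiv\true$. Then the same three checks run in reverse: generality and realizability for $i\le k$ transfer back to $\game$ identically (dropping the inert $\true$ conjunct $\specSE_{k+1}$), and for security one shows any profile $(\strat_j)_{j\le k}$ with $\strat_j\vDash\specSEj$ is a WSE of $\game$ by extending it with an arbitrary environment strategy $\strat_{k+1}\vDash\true$, invoking security of the $(k+1)$-player GWSE, and observing that the resulting statements specialize to the $\game$-versions — again because $\specSE_{k+1}$ and $\spec_{k+1}$ being $\true$ contribute nothing to any conjunction or implication. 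The main obstacle — really the only subtle point — is making the correspondence between "a coalition strategy of $\playersnoti$ in the $k$-player game" and "a strategy of the single environment-plus-other-systems opponent in the $(k+1)$-player game" fully rigorous, i.e. checking that the set of $(\strati)_{i\in\players'}$-plays is literally the same object under this regrouping so that all the $\vDash$ relations are preserved; once that bookkeeping is pinned down, every clause of \cref{def:equilibrium} and \cref{def:SE} matches up mechanically. I would state that play-set equality as a one-line lemma and then let the rest follow.
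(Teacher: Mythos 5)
The paper gives no formal proof of \cref{corollary:gamesEnv}; it is justified only by the informal discussion preceding it, and your overall strategy --- unfold the three clauses of \cref{def:equilibrium} and the two clauses of \cref{def:SE}, observe that $\spec_{k+1}=\specSE_{k+1}=\true$ is inert in every conjunction and implication, and reduce everything to the bookkeeping fact that regrouping the non-$i$ vertices does not change the set of $(\strati)_{i\in\players'}$-plays or the winning regions $\team{i}\specSEi$ --- is exactly that justification, carried out in more detail. Generality and realizability are handled correctly.

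However, the claim that the security clause ``matches up mechanically'' hides the one genuinely delicate point, and as written your forward direction has a gap there. In the $k$-player game with an environment, the pair $(\strati',\strat_{-i})$ appearing in \cref{item:def:SE2} of \cref{def:SE} is not a complete strategy profile: $\not\vDash$ then means ``\emph{some} play, under \emph{some} environment behaviour, violates the specification,'' and the plays witnessing the antecedent ($\not\vDash\spec_{-i}$) and the consequent ($\not\vDash\speci$) may arise from \emph{different} environment behaviours. In the $(k{+}1)$-player game, once $\strat_{k+1}$ is fixed the profile is complete, there is a unique play, and \cref{item:def:SE2} of \cref{def:SE} must hold for that single play. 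Hence the implication ``$(\strati',\strat_{-i})\not\vDash\spec_{-i}\Rightarrow(\strati',\strat_{-i})\not\vDash\speci$ in $\game$'' does \emph{not} by itself yield the corresponding implication in $\game'$ for every choice of $\strat_{k+1}$: your appeal to ``inheritance from the WSE property in $\game$'' imports an existential witness for $\not\vDash\speci$ that need not be the play generated by $\strat_{k+1}$. The fix is to let your ``one-line lemma'' do more work: one must read $\vDash$ and \cref{item:def:SE2} of \cref{def:SE} in the environment game with the environment's behaviour quantified alongside the deviation (equivalently, take the corollary as the \emph{definition} of GWSE in the presence of an environment, which is how the paper effectively uses it); under that reading both directions really are unfoldings. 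Separately, the parenthetical in your opening observation is incorrect: from generality of a $(k{+}1)$-tuple one cannot conclude $\lang(\bigwedge_{i\in[1;k]}\specSEi)\subseteq\lang(\bigwedge_{i\in[1;k]}\speci)$ ``trivially'' without knowing $\specSE_{k+1}$; fortunately that whole ``w.l.o.g.'' digression is unnecessary, since the statement already fixes $\specSE_{k+1}=\true$ on both sides.
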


Furthermore, in synthesis problems, the choices of the environment are sometimes restricted based on a certain assumption $\spec_\env$. In such scenarios, a viable approach involves updating each system player's specification $\speci$ to $\spec_\env\Rightarrow\speci$ and subsequently utilizing \cref{corollary:gamesEnv} to compute a GWSE.
An alternative approach is to consider a $(k+1)$-player game with specification $\spec_{k+1} = \spec_\env$ for the $(k+1)$-th player.
With this approach, the solution becomes more meaningful, as any strategy profile for the system players satisfying the resulting GWSE allows the environment to satisfy its own assumptions $\spec_\env$. 
This approach nicely complements existing works~\cite{ChatterjeeHL10,MajumdarPS19} that aim to synthesize strategies for systems while allowing the environment to fulfill its own requirement.

\subsection{Partially Winning GWSE}\label{subsec:gamesNonMax}
In the preceding sections, we have presented a method for computing \emph{winning} SE, i.e., equilibria where all players satisfy their objectives. %
However, it's worth noting that in certain scenarios, WSE might not exist (see e.g.\ \cite{chatterjee_SE} for a detailed discussion).
In such cases, a subset $\players'$ of players can still form a coalition, which serves their interests by enabling them to compute a GWSE for their coalition only, while treating the remaining players in $\players\setminus\players'$ as part of the environment. This can be accomplished by computing a GWSE with updated specifications denoted as $(\speci')_{i\in\players}$, wherein $\speci' = \speci$ for all $i\in\players'$ and $\speci' = \true$ for all $i\not\in\players'$. This scenario aligns with the concept of considering an environment from \cref{subsec:gamesEnv}.

It is important to emphasize that for instances where no WSE exists, there might not even exist a \emph{unique maximal} outcome for which an SE is feasible, see~\cite[Sec.~5]{chatterjee_SE} for a simple example. 
As a result, there may be multiple coalitions that can offer different advantages to individual players from the initial vertex.
This scenario presents an intriguing, unexplored challenge for future research.

\subsection{Computational Tractability and Termination}\label{subsec:termination-tractability}
While \cref{algo:computeGE} has multiple desirable properties, additionally supported by the possible extensions discussed in \cref{subsec:gamesEnv,subsec:gamesNonMax}, its computational tractability and termination is questionable for the full class of $\omega$-regular games.

As pointed out in \cref{rem:nonparityAPA}, the application of $\computeAPA$ might require changing the game graph for if the input is not a parity specification. While the \emph{language} of the computed APA is guarantee to shrink in every iteration (see the proof of \cref{thm:computeGE}), this does not guarantee termination of \cref{algo:computeGE} as such a language still contains an infinite number of words. Due to the possibly repeated changes in the game graph for APA comutation, the finiteness of the underlying model can also not be used as a termation argument.

In addition, the need to change game graphs induces a severe computational burden. While this might be not so obvious for the polynomial time algorithm $\computeAPA$, this is actually also the case for the (zero-sum) game solver that needs to be invoked \cref{algo:computeGE:if1} of \cref{algo:computeGE}. As the specification for these games also keeps changing in each iteration, a new parity game needs to be constructed in each iteration, which might be increasingly harder to solve, depending on the nature of the added assumptions. 
We will see in \cref{section:fastAlgo} how these problems can be resolved by a suitable restriction of the considered assumption class.

\section{Optimized Computation of GWSE in Parity Games}\label{section:fastAlgo}
As discussed in \cref{subsec:termination-tractability}, the potential need to repeatedly change game graphs in the computations of \cref{algo:computeGE:if1,algo:computeGE:assumpi'} in \cref{algo:computeGE} might incur increasing computational costs and prevents a termination guarantee. To circumvent these problems, this section proposes a different algorithm for GWSE synthesis which overapproximates APA's by a simpler assumption class, called UCA's. The resulting algorithm is computationally more tractable and ensured to terminate.
Nevertheless, unlike the semi-algorithm discussed in the previous section, this algorithm may not be able to compute a GWSE in all scenarios where the semi-algorithm can.

\subsection{From APA's to UCA's}
One of the main features of APA's on $\p{i}$ computed by $\computeAPA$ from \cite{APA}, is the fact that they can be expressed by well structured templates using $\p{i}$'s edges, 
namely \emph{unsafe-edge-}, \emph{colive-edge-}, and \emph{(conditional)-live-group-templates}. Unsafe- and colive-edge-templates are structurally very simple. Given a set of unsafe edges $S\subseteq E_i$ and colive edges $C\subseteq E_i$ the respective assumption templates $\assumpsafe(S):=\bigwedge_{e\in\safegroup}\LTLalways\neg e$ and $\assumpcolive(C):=\bigwedge_{e\in\colivegroup}\LTLeventually\LTLalways\neg e$ simply assert that unsafe (resp. colive) edges should never (resp. only finitely often) be taken. 
We call an assumption which can be expressed by these two types of templates an \textbf{U}nsafe- and \textbf{C}olive-edge-template \textbf{A}ssumption (UCA), as defined next.
\begin{definition}
 Given a $k$-player game graph $\gamegraph = (V,E)$, a specification $\assump$ is called an unsafe- and colive-edge-template assumption (UCA) for $\p{i}$, if there exist sets $\safegroup,\colivegroup\subseteq E_i$ s.t.\ $\assump:=\assumpsafe(S) \wedge \assumpcolive(C)$. We write $\assumptemp{\safegroup}{\colivegroup}$ to denote such assumptions.
\end{definition}

It was recently shown by Schmuck et al.~\cite{vmcai} that two-player (zero-sum) parity games under UCA assumptions, i.e., games $(\gamegraph,\assump\Rightarrow\spec)$ where $\assump$ is an UCA and $\spec$ is a parity specification over $\gamegraph$, can be directly solved over $\gamegraph$ without computational overhead, compared to the non-augmented version $(\gamegraph,\spec)$ of the same game. Interestingly, the synthesis problem under assumptions becomes proveably harder if live-group-templates $\assumpgrlive$ are needed to express an assumption, requiring a change of the game graph in most cases. Conditional-live-group-templates $\assumpgrlive$, are structurally more challenging than UCA's, as they impose a Streett-type fairness conditions on edges in $\gamegraph$ (see \cite[Sec.4]{APA} for details).

Motivated by this result, we will restrict the assumption class used for GWSE computation to UCA's in this section. Unfortunately, UCA's are typically not expressive enough to capture APA's for parity games. This follows from one of the 
main results of Anand et al., which shows that APA's computed by $\computeAPA$ for parity games are expressible by a conjunctions of all \emph{three} template types, as re-stated in the following proposition.

\begin{proposition}[{\cite[Thm. 3]{APA}}]
 Given the premisses of \cref{lemma:computeAPA}, the APA computed by $\computeAPA$ on $\p{i}$ can be written as the conjunction $\assump:=\assumpsafe(S) \wedge \assumpcolive(C) \wedge \assumpgrlive$
 where $\safegroup,\colivegroup\subseteq E_i$.
 \end{proposition}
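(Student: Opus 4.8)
The statement to prove is that the APA computed by $\computeAPA$ on $\p{i}$ for a parity specification $\spec = \parity(\priority)$ can be written as $\assump := \assumpsafe(S) \wedge \assumpcolive(C) \wedge \assumpgrlive$ with $S, C \subseteq E_i$. Since this is re-stated as \cite[Thm. 3]{APA}, the plan is essentially to recall the structure of the $\computeAPA$ procedure from \cite{APA} and trace which template type each step contributes. First, I would recall that $\computeAPA$ for a parity game proceeds by recursion on the number of priorities (a Zielonka-style decomposition), and at each recursive level it invokes the sub-procedures for safety (producing unsafe edges), co-B\"uchi (producing colive edges), and B\"uchi (producing live-group conditions). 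The key observation is that the sets of unsafe and colive edges accumulated are always subsets of $\p{i}$'s own edges $E_i$, because the assumption constrains only the protagonist's moves (cf.\ \cref{remark:APA}); the live-group component is collected into the conditional-live-group-template $\assumpgrlive$, which is a Streett-type conjunction over edges.

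The main steps I would carry out: (1) State precisely the recursive structure of $\computeAPA$ on parity games from \cite{APA}, identifying the three sub-calls and the edge sets each returns. (2) Argue by induction on the number of priorities $d$ that the returned assumption is a conjunction of an unsafe-edge-template $\assumpsafe(S)$, a colive-edge-template $\assumpcolive(C)$, and a conditional-live-group-template $\assumpgrlive$, where in the base case (one or two priorities) only the safety/co-B\"uchi/B\"uchi sub-procedures fire and directly yield templates of the claimed shape. (3) For the inductive step, observe that conjoining assumptions of this shape across recursive levels preserves the shape: unsafe parts merge into a larger $S$, colive parts into a larger $C$, and live-group parts into a larger Streett condition $\assumpgrlive$ — using the fact that conjunctions of unsafe-edge-templates (resp. colive-edge-templates) are again unsafe-edge-templates (resp. colive-edge-templates) over the union of edge sets. (4) Confirm that throughout, every edge placed in $S$ or $C$ originates from $\p{i}$'s vertices, hence $S, C \subseteq E_i$.

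The main obstacle is that the proof is not really self-contained within this excerpt: it relies entirely on the internal workings of the $\computeAPA$ algorithm of \cite{APA}, which are only sketched here (via the template vocabulary in Section~\ref{section:fastAlgo} and the pointer to \cite[Sec.~4]{APA}). So the honest ``proof'' is a citation plus a structural sanity check: I would note that the result is exactly \cite[Thm. 3]{APA}, and the only thing that needs checking in the present $k$-player setting is that viewing the $k$-player game as a two-player game between $\p{i}$ and the coalition $\playersnoti$, and constraining $\p{i}$'s edges rather than the opponent's (the dualization noted in \cref{remark:APA}), does not alter the template structure — which it does not, since the decomposition is symmetric in which player's edges are being restricted. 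Therefore the cleanest presentation is: invoke \cite[Thm. 3]{APA} directly, with a one-line remark that the adaptation of \cref{remark:APA} only swaps the role of the players and hence preserves the conclusion that $S, C \subseteq E_i$.
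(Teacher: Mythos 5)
Your proposal matches the paper's treatment: the paper gives no proof of this proposition at all, presenting it purely as a restatement of \cite[Thm.~3]{APA}, and your conclusion that the honest argument is the citation itself plus the observation that the role-swap of \cref{remark:APA} preserves the template structure (so $S,C\subseteq E_i$) is exactly the right level of justification. The additional inductive sketch of $\computeAPA$'s internals is plausible but unnecessary here, since the paper neither reproduces nor requires it.
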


We therefore need to overapproximate APA's by UCA's, by simply dropping the $\assumpgrlive$-term from their defining conjunction, as formalized next.

\begin{definition}
 Given the premisses of \cref{lemma:computeAPA}, let $\assump:=\computeAPA(\gamegraph,\spec,i)=\assumpsafe(S) \wedge \assumpcolive(C) \wedge \assumpgrlive$.
Then we denote by $\approxAPA(\gamegraph,\spec,i)$ the algorithm that computes $\assumptemp{\safegroup}{\colivegroup}$ by first executing $\computeAPA(\gamegraph,\spec,i)$ and then dropping all $\assumpgrlive$-terms from the resulting APA.
\end{definition}

It is easy to see that $\lang(\assump)\subseteq\lang(\assumptemp{\safegroup}{\colivegroup})$. Therefore, it also follows that $\assumptemp{\safegroup}{\colivegroup}$ is implementable and permissive (i.e., \cref{def:APA}(ii) and (iii) holds). Unfortunatly, $\assumptemp{\safegroup}{\colivegroup}$ is in general no longer sufficient (i.e., \cref{def:APA}(i) does not necessarily hold). 
As the proof of \cref{thm:computeGE} only uses permissiveness of APA, even though sufficiency is lost for UCA's, replacing $\computeAPA$ by $\approxAPA$ in \cref{algo:computeGE} does not mitigate soundness, i.e., whenever $\computeGE$ terminates in \cref{algo:computeGE:terminate1} with a specification profile $(\specSE_i)_{i\in\players}$, this profile is indeed a GWSE, even if APA's are over-approximated by UCA's. This is formalized next.

\begin{theorem}\label{thm:approxAPA}
Let $A\computeGE$ be the algorithm obtained by replacing procedure $\computeAPA$ by $\approxAPA$ in \cref{algo:computeGE}. Then, given a $k$-player game $\game$ with parity specifications such that $(\specSEi^*)_{i\in\players } = A\computeGE(\game)$, the tuple $(\specSEi^*)_{i\in\players }$ is a GWSE for $\game$.
\end{theorem}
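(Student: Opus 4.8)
The plan is to observe that the proof of \cref{thm:computeGE} is essentially \emph{modular} in its use of APA properties, and that the single place where sufficiency of APA's is invoked can be isolated and shown to be unnecessary. Concretely, I would first re-examine the proof of \cref{thm:computeGE} line by line and locate every appeal to a property of $\computeAPA(\gamegraph,\assump_{-i}\wedge\speci,i)$. There are exactly three kinds of appeals: (a) the preliminary argument handling the case where $\computeAPA$ returns $\false$, which uses \emph{sufficiency}; (b) the inductive proof of claim~1 and claim~2, which uses only \emph{permissiveness} ($\lang(\computeAPA(\gamegraph,\assump_{-i}\wedge\speci,i))\supseteq\lang(\assump_{-i})\cap\lang(\speci)$); and (c) the verification of \cref{def:equilibrium}(i)--(iii), which again uses only claims~1 and~2 (hence only permissiveness) together with the termination test in \cref{algo:computeGE:if1}. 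Since $\approxAPA$ returns $\assumptemp{\safegroup}{\colivegroup}$ with $\lang(\assump)\subseteq\lang(\assumptemp{\safegroup}{\colivegroup})$, permissiveness is preserved, so claims~1 and~2 go through verbatim with $\computeAPA$ replaced by $\approxAPA$, and the entire argument for \cref{def:equilibrium}(i)--(iii) carries over unchanged.

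Second, I would deal with part~(a), the $\false$-handling argument, which is the only genuinely new point. Here the subtlety is that $\approxAPA$ never actually returns $\false$ in the way $\computeAPA$ does: $\approxAPA$ first runs $\computeAPA$ and then drops the $\assumpgrlive$-term, so $\approxAPA$ returns $\false$ precisely when $\computeAPA$ does, i.e.\ when no APA for $\assump_{-i}\wedge\speci$ exists, which (by \cref{remark:APA}) happens iff $v_0\notin\teamall(\assump_{-i}\wedge\speci)$. I would argue that the original $\false$-propagation argument only relied on the fact that once $\assumpi'=\false$ for some $i$, in the next iteration $\assump_{-j}=\false$ for all $j\neq i$, forcing the relevant specifications to $\false$ and hence (since $\computeAPA$ returns $\false$ on an unrealizable-from-$v_0$ input, and dropping $\assumpgrlive$ from $\false$ still yields $\false$) $\approxAPA$ also returns $\false$ for all those players; the algorithm then terminates with $\false$, contradicting the premise that $A\computeGE(\game)$ returned a tuple. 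The one place to be slightly careful is that when $\assump_{-j}\wedge\specj=\false$, we need $\approxAPA(\gamegraph,\false,j)=\false$; this holds because $\computeAPA(\gamegraph,\false,j)=\false$ (no APA exists since $v_0\notin\teamall\false=\emptyset$) and $\approxAPA$ simply post-processes this $\false$. So this case cannot arise under the hypothesis of the theorem, exactly as before.

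Third, having reduced everything to the observations above, I would write the proof as a short argument: state that replacing $\computeAPA$ by $\approxAPA$ preserves the permissiveness inequality used in the proof of \cref{thm:computeGE} (since $\lang(\assumpsafe(S)\wedge\assumpcolive(C)\wedge\assumpgrlive)\subseteq\lang(\assumpsafe(S)\wedge\assumpcolive(C))$), note that claims~1 and~2 therefore still hold by the same induction, note that the $\false$-handling step is unaffected since $\approxAPA$ inherits the $\false$-return behaviour of $\computeAPA$, and conclude that the verification of \cref{def:equilibrium}(i)--(iii) goes through verbatim, so $(\specSEi^*)_{i\in\players}$ is a GWSE.

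I expect the main obstacle to be purely expository rather than mathematical: making precise \emph{which} sentences of the proof of \cref{thm:computeGE} use sufficiency and convincingly arguing that none of them are load-bearing for the conclusion. In particular, one must double-check that the termination-condition check in \cref{algo:computeGE:if1} (the realizability of $\specSEi^* = \assumpi\wedge(\assump_{-i}\Rightarrow\speci)$ from $v_0$) does not secretly depend on $\assumpi$ being \emph{sufficient} — it does not, because realizability is verified directly by the zero-sum game solver on the actual specification $\specSEi^*$, independently of whether the $\assumpi$'s satisfy any APA property at all. Once this is spelled out, the theorem follows with essentially no extra computation.
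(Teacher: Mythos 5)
Your proposal is correct and follows essentially the same route as the paper, which justifies this theorem solely by the remark that the proof of \cref{thm:computeGE} only relies on permissiveness of the computed assumptions --- a property preserved when the $\assumpgrlive$-terms are dropped, since $\lang(\assumpsafe(S)\wedge\assumpcolive(C)\wedge\assumpgrlive)\subseteq\lang(\assumptemp{\safegroup}{\colivegroup})$. In fact you are more careful than the paper: you correctly observe that the $\false$-propagation step of that proof does invoke sufficiency, and you patch this by noting that $\approxAPA$ inherits the $\false$-return behaviour of $\computeAPA$ (it merely post-processes its output), so that step, and hence the whole argument, carries over unchanged.
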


The rest of this section will now show how the restriction to UCA's allows to execute \cref{algo:computeGE:if1,algo:computeGE:assumpi'} in \cref{algo:computeGE} efficiently and allows to prove termination of the resulting algorithm for GWSE computation.

\subsection{Iterative Computation of UCA's}\label{sec:computeUCA}
We have seen in the previous section that UCA's can be computed by utilizing $\computeAPA$ and dropping all  $\assumpgrlive$ terms (called $\approxAPA$). Of course, this can be done in every iteration of $\computeGE$. However, $\computeAPA$ expects a party game as an input, and from the second iteration of $\computeGE$ onward the input to $\computeAPA$ is given by $(\gamegraph,\assumpnoti\wedge\speci,i)$, where $\assumpnoti$ is an assumption on players in $\playersnoti$, which is not necessarily a parity game. 

This section therefore provides a new algorithm, called $\computeUCA$ and given in \cref{algo:computeUCA} which computes UCA's for $\p{i}$ directly on the game graph $\gamegraph$ for games $(\gamegraph,\assump\wedge\spec)$ where $\assump = \assumptemp{\safegroup}{\colivegroup}$ is an UCA for $\playersnoti$ with unsafe edges $S\subseteq E_{-i}$ and colive edges $C\subseteq E_{-i}$, and $\spec$ is a parity specification, both over $\gamegraph$. Intuitively, $\computeUCA$ first slightly modifies $\gamegraph$ to a new two-player game graph $\hat{\gamegraph}$ (\cref{algo:computeUCA:graph1,algo:computeUCA:graph2}) s.t.\ the specification $\assump\wedge\spec$ can be directly expressed as a parity specification $\hat{\spec}$ on $\hat{\gamegraph}$ (\cref{algo:computeUCA:graph3}). 
This allows to apply $\approxAPA$ to construct and return an UCA for $\p{1}$ on $\hat{\gamegraph}$ (\cref{algo:computeUCA:compute}).
As the resulting UCA is for $\p{i}$, the unsafe edge and colive edge sets are subsets of $E_i$.
Further, due to the mild modifications from $\gamegraph$ to $\hat{\gamegraph}$, the edges of $\p{i}$ are retained in $\hat{\gamegraph}$ as $E_1$, hence, the resulting UCA is a well-defined UCA for $\p{i}$ in $\gamegraph$.

We have the following soundness result for showing equivalence between the UCA's computed by $\computeUCA$ and $\approxAPA$ for UCA assumptions, proven in \cref{appendix:UCA}.

\begin{algorithm}
	\caption{$\computeUCA(\gamegraph,\assumptemp{\safegroup}{\colivegroup}\wedge\spec,i)$}\label{algo:computeUCA}
	\begin{algorithmic}[1]
		\Require A $k$-player game graph $\gamegraph = (V,E,v_0)$ and specification $\assump\wedge\spec$ with UCA $\assump = \assumptemp{\safegroup}{\colivegroup}$ for $\playersnoti$, i.e., $S,C\subseteq E_{-i}$, and $\spec=\parity(\priority)$ s.t.\ $\priority:V\rightarrow [0;2d+1]$.
		\Ensure An UCA $\assumptemp{\safegroup'}{\colivegroup'}$ for $\p{i}$.
        \State $\hat{V}_1 \gets V_i$ and $\hat{V}_2 \gets V_{-i}\uplus \colivegroup$\label{algo:computeUCA:graph1}
        \State $\hat{E_1} \gets E_i$ and $\hat{E_2} \gets E_{-i}\setminus(\safegroup\cup\colivegroup) \cup \{(u,c),(c,v) \mid c = (u,v)\in\colivegroup\}$\label{algo:computeUCA:graph2}
        \State $\hat{\priority} = \begin{cases}
        \priority(v) &\text{if $v\in V$}\\
        2d+1 &\text{otherwise.} 
    \end{cases}$\label{algo:computeUCA:parity}
        \State $\hat{\gamegraph} = (\hat{V}_1\uplus \hat{V}_2,\hat{E_1}\uplus\hat{E_2},v_0)$; $\hat{\spec}\gets\parity(\hat{\priority})$\label{algo:computeUCA:graph3}
        \State \Return $\approxAPA(\hat{\gamegraph},\hat{\spec},1)$\label{algo:computeUCA:compute}
	\end{algorithmic}
\end{algorithm}

\begin{restatable}{proposition}{restateUCA}\label{prop:computeUCA}
    Given game graph $\gamegraph = (V,E,v_0)$ with parity specification $\spec$ and an UCA $\assump = \assumptemp{\safegroup}{\colivegroup}$ for $\playersnoti$, 
    let $\assump':=\approxAPA(\gamegraph,\assump\wedge\spec,i)$ and 
    $\assump'':=\computeUCA(\gamegraph,\assump\wedge\spec,i)$ then $\lang(\assump')=\lang(\assump'')$.
    Furthermore, $\computeUCA$ terminates in time $\bigO((\abs{V}+\abs{E})^4)$.
\end{restatable}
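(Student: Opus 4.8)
\textbf{Proof proposal for Proposition~\ref{prop:computeUCA}.}

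The plan is to show that the two-player game graph $\hat{\gamegraph}$ constructed in $\computeUCA$ (lines~\ref{algo:computeUCA:graph1}--\ref{algo:computeUCA:graph3}) is a faithful encoding of the augmented specification $\assump\wedge\spec$ over $\gamegraph$, so that running $\approxAPA(\hat{\gamegraph},\hat{\spec},1)$ gives exactly the same UCA as $\approxAPA(\gamegraph,\assump\wedge\spec,i)$. The central bookkeeping observation is that the modification from $\gamegraph$ to $\hat{\gamegraph}$ touches only $\playersnoti$'s edges: unsafe edges $S\subseteq E_{-i}$ are deleted outright, and each colive edge $c=(u,v)\in C\subseteq E_{-i}$ is subdivided through a fresh $\p{2}$-vertex $c$ carrying the maximal odd priority $2d+1$. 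Crucially, $\p{i}$'s edges $E_i$ are kept verbatim as $\hat{E_1}$, so a UCA $\assumptemp{S'}{C'}$ with $S',C'\subseteq \hat{E_1}=E_i$ is automatically a well-defined UCA for $\p{i}$ in $\gamegraph$; this is the syntactic part and is immediate.

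First I would establish a play-correspondence lemma. Given an infinite play $\play$ in $\gamegraph$ that visits no edge of $S$, there is a canonical play $\hat{\play}$ in $\hat{\gamegraph}$ obtained by inserting the intermediate vertex $c$ whenever $\play$ traverses a colive edge $c\in C$; conversely every play of $\hat{\gamegraph}$ that only finitely often visits vertices in $C$ projects back to a play of $\gamegraph$. Under this correspondence I would check: (a) $\play$ traverses $c\in C$ infinitely often iff $\hat{\play}$ visits the vertex $c$ infinitely often iff $\hat{\play}$ sees priority $2d+1$ infinitely often at that coordinate; (b) since $2d+1$ is the unique maximal and odd priority, $\hat{\play}\models\hat{\spec}$ iff $\hat{\play}$ visits every $C$-vertex only finitely often \emph{and} the original priorities' maximum seen infinitely often is even, i.e.\ iff $\play$ visits no unsafe edge, finitely often traverses each colive edge, and satisfies $\spec$ — in other words $\play\models\assump\wedge\spec$. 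This gives $\lang(\hat{\gamegraph},\hat{\spec})$ is (modulo the canonical insertion) exactly $\lang(\gamegraph,\assump\wedge\spec)$ restricted to $\p{i}$-compatible behaviour, and also matches the sufficiency/winning-region structure that $\computeAPA$ (hence $\approxAPA$) operates on. Because $\approxAPA$ only depends on the winning regions and the attractor/template computation, and these transfer across the correspondence with $E_i$-edges identified, the output unsafe/colive sets coincide; hence $\lang(\assump')=\lang(\assump'')$.

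Second I would handle the complexity claim. The new graph has $|\hat{V}|=|V|+|C|\le |V|+|E|$ vertices and $|\hat{E}|\le |E|+|C|\le 2|E|$ edges, and $\hat{\priority}$ still ranges over $[0;2d+1]$, so constructing $\hat{\gamegraph}$ is linear. By Lemma~\ref{lemma:computeAPA}, $\computeAPA$ on $\hat{\gamegraph}$ runs in $\bigO(|\hat{V}|^4)=\bigO((|V|+|E|)^4)$, and dropping the $\assumpgrlive$-terms is negligible; this yields the stated bound.

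The main obstacle I anticipate is the play-correspondence argument at the colive-edge subdivision, specifically arguing that inserting $\p{2}$-controlled intermediate vertices does not give $\p{i}$ (here relabelled $\p{1}$ in $\hat{\gamegraph}$) any spurious new power or the opponent any spurious escape: one must verify that from $c$ the only successor is $v$, so the subdivision is strategically transparent, and that deleting $S$-edges is sound because permissiveness of the APA only ever \emph{requires} $\lang(\assumpi')\supseteq\lang(\assump_{-i}\wedge\speci)$ and plays violating $\assump_{-i}$ are irrelevant. Making the strategy transfer precise in both directions — so that the winning regions and the computed templates genuinely agree rather than merely the languages — is the delicate step; everything else is routine.
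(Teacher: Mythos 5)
Your proposal is correct and follows essentially the same route as the paper: the paper's proof also rests on a play/language-correspondence lemma ($\lang(\hat{\gamegraph},\hat{\spec})|_V = \lang(\gamegraph,\spec\wedge\assump)$, via inserting/removing the subdivision vertices for colive edges and using that $2d+1$ is the unique maximal odd priority), plus the observation that $\hat{E}_1 = E_i$ so the templates computed by $\approxAPA$ on either game range over the same edge set, and the bound $|\hat{V}|\le|V|+|E|$ combined with \cref{lemma:computeAPA} for the running time. The "delicate step" you flag — that the computed templates, not just the languages, agree — is treated at the same level of detail in the paper (it asserts that both APA computations use the same $\p{i}$-edges in each template), so you are not missing anything the paper supplies.
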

The proof of this result is given in appendix, and essentially relies on the observation that the parity specification $\hat{\spec}$ in $\hat{\gamegraph}$ expresses the language $\lang(\assump\wedge\spec)$ when restricted to $V$, i.e, $\lang(\hat{\gamegraph},\hat{\spec})|_V = \lang(\gamegraph,\spec\wedge\assump)$ and the fact that every UCA for $\p{1}$ in $\hat{\gamegraph}$ is also an UCA for $\p{i}$ in $\gamegraph$ as proven in \cref{lemma:graphUnsafeColive} in \cref{appendix:UCA}.

The usefulness of expressing the computed assumptions as unsafe and colive edge sets $S,C$ over the input game graph $\gamegraph$ is that there are only a finite number of edges in that graph. Therefore, there obviously also exists only a finite number of unsafe or colive edge sets, which could all be enumerated in the worst case. Therefore, computing UCA's on the same game graph in every iteration, will ensure termination of the overall computation of GWSE.

\subsection{Solving Parity Games under UCA's}\label{sec:computeWin}
As the final step towards an optimized version of \cref{algo:computeGE}, we now address the computations required in \cref{algo:computeGE:if1} of \cref{algo:computeGE}. Observe that this line requires to check $v_0\in \bigcap_{i\in\players }\team{i}\specSEi$ for $\specSEi=\assumpi \wedge (\assump_{-i}\Rightarrow \speci)$. If this check returns $\true$ the algorithm terminates, if it returns $\false$ new assumptions are computed. In both cases, the game graph used to check this conditional will not have any effect on the future behavior of the algorithm. %

Nevertheless, we utilize the recent result by Schmuck et al.\ \cite{vmcai} to compute $\team{i}\specSEi$ more efficiently if $\assumpi$ and $\assumpnoti$ are UCA's on $\p{i}$ and $\playersnoti$, respectively. The construction uses 
the same idea as presented in \cref{algo:computeUCA} to encode UCA's into a new, slightly modified two-player parity game $(\hat{\gamegraph},\hat{\spec})$ which can then be solved by a standard parity solver, such as Zielonka's algorithm~\cite{Zielonka98}, which return the winning region $\win$ of $\p{1}$ in this new game that corresponds to the winning region of $\p{i}$ in $\gamegraph$. The resulting algorithm is called $\computeWin$ given in \cref{algo:computeWin} in \cref{sec:app:win} and has the property that $v_0\in\team{i}(\gamegraph,\specSE)$ if and only if $v_0\in\win$. This is formalized and proven in \cref{prop:computeWin}, \cref{sec:app:win}.

\subsection{Computation of GWSE via UCA's}\label{sec:ocomputeGE}
With the previously discussed algorithms in place, we are now in the position to propose an optimized, surely terminating algorithm to compute GWSE, called $\ocomputeGE$. Within $\computeGE$ the recursive procedure $\recursiveGE$ is replaced by one which uses the algorithms $\computeUCA$ and $\computeWin$ for UCA's from \cref{sec:computeUCA,sec:computeWin}, as follows

	\begin{algorithmic}[1]
	\footnotesize
		\Procedure {$\recursiveGE$}{$\game,(\assumpi)_{i\in\players }$}
        \State $\specSEi\gets \assumpi \wedge (\assump_{-i}\Rightarrow \speci)$ $\forall i\in\players$
        \State \textcolor{blue}{$\win_i\gets\computeWin(\mathcal{G},\specSEi,i)$}
        \If {$v_0\in \bigcap_{i\in\players }\textcolor{blue}{\win_i}$}
            \State \Return $(\specSEi)_{i\in\players }$
        \EndIf
        \State $\assumpi' \gets \assumpi \wedge\textcolor{blue}{\computeUCA}(\gamegraph,\assump_{-i}\wedge\speci,i)$ $\forall i\in\players$
        \If {$\assumpi' = \assumpi$ for all $i\in\players $}
        \State \Return $\false$
        \EndIf
        \State \Return $\recursiveGE(\game,(\assumpi')_{i\in\players })$
        \EndProcedure
	\end{algorithmic}

We have the following main result of this section.

\begin{theorem}
 Let $\game$ be a $k$-player game with game graph $\gamegraph = (V,E,v_0)$ and parity specifications $(\speci)_{i\in\players }$ such that $(\specSEi^*)_{i\in\players } = \ocomputeGE(\game)$, then $(\specSEi^*)_{i\in\players }$ is a GWSE for $\game$. Moreover, $\ocomputeGE$ terminates in time $\bigO(k^2\abs{E}\cdot(2\abs{V}+2\abs{E})^{d+2})$, where $d$ is the number of priorities used in the parity specifications.
\end{theorem}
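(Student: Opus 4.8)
The plan is to split the claim into a soundness part and a complexity part, and to leverage the already-established soundness of $A\computeGE$ (\cref{thm:approxAPA}) together with the equivalence $\computeUCA \equiv \approxAPA$ on UCA inputs (\cref{prop:computeUCA}) and the correctness of $\computeWin$ (\cref{prop:computeWin}). First I would argue that $\ocomputeGE$ and $A\computeGE$ compute \emph{exactly the same} sequence of assumptions and return the same specification profile: in the initial call all $\assumpi = \true$, and $\true$ is an UCA (with $S=C=\emptyset$); inductively, if $\assumpnoti$ is a conjunction of UCA's — hence itself representable as $\assumptemp{\safegroup}{\colivegroup}$ for $\playersnoti$ — then $\assumpnoti\wedge\speci$ is exactly of the form required by $\computeUCA$, and by \cref{prop:computeUCA} the returned UCA has the same language as $\approxAPA(\gamegraph,\assumpnoti\wedge\speci,i)$, so the conjunction $\assumpi' = \assumpi\wedge\computeUCA(\ldots)$ is language-equivalent to the one produced by $A\computeGE$, and is again a (conjunction of) UCA. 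Likewise the termination test in \cref{algo:computeGE:if1} uses $\win_i = \computeWin(\game,\specSEi,i)$, which by \cref{prop:computeWin} satisfies $v_0\in\win_i \iff v_0\in\team{i}(\gamegraph,\specSEi)$, so the control flow is identical. Hence $(\specSEi^*)_{i\in\players} = \ocomputeGE(\game) = A\computeGE(\game)$, and \cref{thm:approxAPA} gives that it is a GWSE. (A small subtlety here: one must check that UCA's are closed under conjunction across players in the sense needed — $\assumpnoti$ is a conjunction of per-player UCA's, and each $\computeUCA$ call only needs $\assumpnoti$ restricted to edges in $E_{-i}$, which is fine since $\assumpj$ only constrains $E_j\subseteq E_{-i}$.)

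For termination and the time bound, the key point is that every $\assumpi$ stays an UCA over the \emph{fixed} graph $\gamegraph$, i.e.\ it is determined by a pair of edge sets $(\safegroup_i,\colivegroup_i)\subseteq E_i\times E_i$; and by construction $\assumpi' = \assumpi\wedge\computeUCA(\ldots)$ has $\safegroup_i\subseteq\safegroup_i'$ and $\colivegroup_i\subseteq\colivegroup_i'$, so the pairs $(\safegroup_i,\colivegroup_i)$ only grow. The recursion halts as soon as \emph{no} $\assumpi$ changes; since each $\safegroup_i\cup\colivegroup_i\subseteq E_i$ and the union $\bigcup_i(\safegroup_i\cup\colivegroup_i)$ grows by at least one edge per non-terminating iteration (distributed among the players), there are at most $\bigO(\abs{E})$ iterations — more precisely at most $\sum_i 2\abs{E_i} = 2\abs{E}$, but a cleaner bound of $\abs{E}+1$ suffices. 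Then I would tally the per-iteration cost: for each of the $k$ players we call $\computeUCA$ once, costing $\bigO((\abs{V}+\abs{E})^4)$ by \cref{prop:computeUCA}, and $\computeWin$ once, whose cost is dominated by one call to a parity solver (Zielonka) on the modified graph $\hat\gamegraph$ of size $\bigO(\abs{V}+\abs{E})$ with $d+2$ priorities, giving $\bigO((\abs{V}+\abs{E})^{d+2})$ (say). Multiplying: $\bigO(\abs{E})$ iterations $\times\, k$ players $\times\, \bigO((2\abs{V}+2\abs{E})^{d+2})$, plus the lower-order $\computeUCA$ terms absorbed into this; and there is an extra factor $k$ hidden in forming $\assumpnoti$ and the specification $\specSEi$ (a conjunction over $k{-}1$ assumptions), yielding the stated $\bigO(k^2\abs{E}\cdot(2\abs{V}+2\abs{E})^{d+2})$.

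I expect the main obstacle to be the bookkeeping in the complexity bound rather than any conceptual difficulty: one must be careful that the parity index on $\hat\gamegraph$ inside $\computeWin$ and $\computeUCA$ is $d+2$ (the encoding of colive edges adds a fresh maximal \emph{odd} priority $2d+1$, i.e.\ one more priority level beyond the original $[0;d]$, and the specification $\specSEi$ itself is a conjunction that when normalized to a single parity condition keeps the index within $d+2$) so that the exponent matches, and that the vertex/edge count of $\hat\gamegraph$ is genuinely $\bigO(\abs{V}+\abs{E})$ — it is $\abs{V}+\abs{C}\le\abs{V}+\abs{E}$ vertices and at most $\abs{E}+\abs{E}$ edges after splitting colive edges, hence the $2\abs{V}+2\abs{E}$ base in the bound. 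A secondary subtlety is justifying that forming and conjoining the UCA's across players does not blow up the representation: since everything is kept as edge-set templates over $\gamegraph$, conjunction is just union of edge sets, which is cheap. Once these size estimates are pinned down, plugging into the parity-solver complexity and multiplying by the iteration and player counts gives the claimed running time.
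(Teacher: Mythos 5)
Your proposal is correct and follows essentially the same route as the paper: soundness is obtained by combining \cref{thm:approxAPA} with \cref{prop:computeUCA,prop:computeWin} (the paper simply states this combination, while you spell out the useful intermediate fact that $\ocomputeGE$ and $A\computeGE$ produce identical assumption sequences and control flow), and termination is argued from the monotone growth of the unsafe/colive edge sets over the fixed graph, with per-iteration cost dominated by the Zielonka calls. Your iteration count of $2\abs{E}$ is in fact tighter than the paper's $2k\abs{E}$, so the stated bound follows either way; no gaps.
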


\begin{proof}
Combining results from \cref{thm:computeGE} with \cref{thm:approxAPA,prop:computeUCA,prop:computeWin} gives us that $(\specSEi^*)_{i\in\players }$ is indeed a GWSE for $\game$.
Furthermore, as $\assumpi$ (for all $i\in\players $) in each iteration of the algorithm either remains the same or add more unsafe/colive edges, it can only change $2\abs{E}$ times.
Hence, as there are $k$ players, the algorithm $\ocomputeGE$ will terminate within $2k\abs{E}$ iterations.
Moreover, each iteration involves $k$ calls to both $\computeWin$ and $\computeUCA$.
Using Zielonka's algorithm\footnote{We note that the time complexity is exponential as we use Zielonka's algorithm~\cite{Zielonka98} to solve parity games. One can also use a quasi-polynomial algorithm~\cite{CaludeJKL017} for solving parity games to get a quasi-polynomial time complexity for $\ocomputeGE$.}~\cite{Zielonka98} for solving parity games, each iteration will take $\bigO((2\abs{V}+2\abs{E})^{d+2})$ time for $d$ priorities (by \cref{prop:computeWin,prop:computeUCA}).
In total, this gives us that $\ocomputeGE$ terminates in time $\bigO(k^2\abs{E}\cdot(2\abs{V}+2\abs{E})^{d+2})$.
\qed
\end{proof}

\begin{remark}
As Anand et al.\ show that APA's for games with \cobuchi specifications (i.e., $\spec = \LTLeventually\LTLalways T$ for some $T\subseteq V$) are always expressible by UCA's \cite[Thm. 3]{APA}, we note that $\computeAPA$ and $\approxAPA$ coincide for such games. This implies that no over approximation of assumptions is needed in this case an the optimizations discussed for $\computeUCA$ and $\computeWin$ can be directly applied for APA's. 

We further note that $\ocomputeGE$ also efficiently computes GWSE for games with more expressive specifications than \cobuchi. For instance, all games discussed in this paper as well as the mutual exclusion protocol discussed in \cite{ChatterjeeR14} can be solved by $\ocomputeGE$.
\end{remark}

\newpage
\bibliographystyle{splncs04}
\bibliography{main}

\newpage
\appendix
\section{Correctness of \computeUCA}\label{appendix:UCA}
Let us first prove some results in the following lemma that is needed to prove \cref{prop:computeUCA}.
\begin{lemma}\label{lemma:graphUnsafeColive}
    Given a game graph $\gamegraph$ with parity specification $\spec$ and an UCA $\assump = \assumptemp{\safegroup}{\colivegroup}$ for $\playersnoti$, let $\hat{\gamegraph},\hat{\spec}$ be the game graph and parity specification, respectively computed in \cref{algo:computeUCA}. Then, the following holds: 
    \begin{enumerate}
        \item $\lang(\hat{\gamegraph},\hat{\spec})|_V = \lang(\gamegraph,\spec\wedge\assump)$;\label{item:lemma:graphUnsafeColive1}
        \item Every UCA $\hat{\assump} = \assumptemp{\hat{S}}{\hat{C}}$ for $\p{1}$ in $\hat{\gamegraph}$ is an UCA for $\p{i}$ in $\gamegraph$.\label{item:lemma:graphUnsafeColive2} 
    \end{enumerate}
\end{lemma}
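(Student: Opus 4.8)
The plan is to prove both parts by directly relating the plays of $\gamegraph$ and of the surgically modified graph $\hat{\gamegraph}$ built in \cref{algo:computeUCA}. Recall the surgery does exactly three things: it deletes every unsafe edge in $\safegroup$; it reroutes every colive edge $c=(u,v)\in\colivegroup$ through a fresh vertex $c\in\hat{V}_2$ (freshness guaranteed by the $\uplus$ in the construction) carrying the odd, maximal priority $2d+1$; and it leaves $\p{i}$'s part $(V_i,E_i)$ — and all priorities on $V$-vertices — untouched, using $\safegroup,\colivegroup\subseteq E_{-i}$ and $E_i\cap E_{-i}=\emptyset$. The bridge between the two games will be the projection that erases every occurrence of a $\colivegroup$-vertex from a play of $\hat{\gamegraph}$, together with its partial inverse that, on a play of $\gamegraph$ never using a $\safegroup$-edge, inserts $c$ between $u$ and $v$ whenever the colive edge $c=(u,v)$ is traversed.

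For \cref{item:lemma:graphUnsafeColive1} I would first check that these two maps are well defined on plays: deleting $\safegroup$-edges makes the projection of any $\hat{\gamegraph}$-play a $\gamegraph$-play satisfying $\assumpsafe(\safegroup)$, and since each fresh vertex $c$ has a unique incoming edge (from the source of $c$) and a unique outgoing edge (to the target of $c$), the projection really does yield valid $\gamegraph$-plays while the insertion map really does yield valid $\hat{\gamegraph}$-plays, and the two compose to the identity on $\gamegraph$-plays satisfying $\assumpsafe(\safegroup)$. The technical core is then a short bookkeeping argument: because $2d+1$ is odd and strictly above all $V$-priorities, a $\hat{\gamegraph}$-play satisfies $\hat{\spec}$ iff each fresh vertex is visited only finitely often — equivalently, since $\colivegroup$ is finite, only finitely many fresh vertices occur along the play — and the largest $V$-priority seen infinitely often is even. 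On such plays the projection and the insertion map alter the word by only finitely many fresh letters, so they preserve the set of priorities seen infinitely often (as $\hat{\priority}$ and $\priority$ agree on $V$) and, via the unique-in/out-edge property, preserve the number of times each colive edge is traversed. Chaining these equivalences yields that $\hat{\rho}$ satisfies $\hat{\spec}$ iff its projection satisfies $\spec\wedge\assumpsafe(\safegroup)\wedge\assumpcolive(\colivegroup)=\spec\wedge\assump$, and symmetrically for the insertion map, which is precisely $\lang(\hat{\gamegraph},\hat{\spec})|_V=\lang(\gamegraph,\spec\wedge\assump)$. I expect this priority/traversal-count bookkeeping — keeping disjointness of $\safegroup$ and $\colivegroup$ straight, and arguing that a colive edge can never be ``shortcut'' directly in $\hat{\gamegraph}$ — to be the main, though routine, obstacle.

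For \cref{item:lemma:graphUnsafeColive2} the argument is essentially definitional: an UCA for $\p{1}$ in $\hat{\gamegraph}$ is by definition $\assumptemp{\hat{S}}{\hat{C}}$ with $\hat{S},\hat{C}\subseteq\hat{E}_1$, and the construction sets $\hat{E}_1=E_i$ verbatim (no $\p{i}$-edge is deleted or rerouted, since $\safegroup,\colivegroup\subseteq E_{-i}$ and $E_i\cap E_{-i}=\emptyset$), hence $\hat{S},\hat{C}\subseteq E_i$; since an UCA for $\p{i}$ in $\gamegraph$ is, by definition, just a template $\assumptemp{S'}{C'}$ with $S',C'\subseteq E_i$, the formula $\assumptemp{\hat{S}}{\hat{C}}$ is one — and indeed the very same formula, since the sugar $e=(u,v)$ unfolds identically over both graphs. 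To close, I would record how the two parts serve \cref{prop:computeUCA}: part~(1) lets one transport the $\approxAPA$-computation from $(\gamegraph,\spec\wedge\assump)$ to the plain parity game $(\hat{\gamegraph},\hat{\spec})$ without affecting the resulting unsafe/colive edge sets, and part~(2) certifies that $\approxAPA(\hat{\gamegraph},\hat{\spec},1)$ returns a syntactically valid UCA for $\p{i}$ back in $\gamegraph$.
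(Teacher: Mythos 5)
Your proposal is correct and follows essentially the same route as the paper's proof: both rely on the projection $\hat{\play}\mapsto\hat{\play}|_V$ (and its inverse insertion of the fresh $\colivegroup$-vertices), the observation that priority $2d+1$ is odd and maximal so fresh vertices may only be visited finitely often, and the fact that $\hat{E}_1=E_i$ for part~(2). Your version merely spells out the well-definedness and mutual-inverse properties of the two play maps more explicitly than the paper does.
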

\begin{proof}
        It is easy to see that, for any play $\play$ in $\hat{\gamegraph}$, the play $\play|_V$, obtained by restricting the vertices of $\play$ only to $V$, is the corresponding play in $\gamegraph$. 
        Now, let us prove both items of the lemma.
        \paragraph*{(1)} Let $\play$ be play in $\lang(\gamegraph,\spec\wedge\assump)$.
        Then, $\play\vDash\assumpsafe(\safegroup)$, and hence, no unsafe edge of $\safegroup$ appear in $\play$. 
        So, by construction as all other edges are retained from $\gamegraph$ to $\hat{\gamegraph}$ (with additional middle vertex for colive edges), there exists a corresponding play $\hat{\play}$ in $\hat{\gamegraph}$ such that $\hat{\play}|_V = \play$.
        Furthermore, as colive edges appear only finitely often in $\play$ and $\play$ satisfies parity specification $\spec$, priority $2d+1$ only appear finitely often in $\hat{\play}$.
        Moreover, by the parity condition, the maximal priority in $[0;2d]$ appearing infinitely in $\hat{\play}$ is even.
        So, $\hat{\play}$ also satisfies parity specification $\hat{\spec}$, and hence, $\hat{\play}\in\lang(\hat{\gamegraph},\hat{\spec})$.
        Analogously, one can show the other direction by proving for every play $\hat{\play}\in\lang(\hat{\gamegraph},\hat{\spec})$, the corresponding play $\hat{\play}|_V$ belongs to $\lang(\gamegraph,\spec\wedge\assump)$.
        \paragraph*{(2)} As $\hat{\assump}$ is an UCA for $\p{1}$, it holds that $\hat{S},\hat{C}\subseteq \hat{E_1} = E_i$. Hence, $\hat{\assump}$ is also an (well-defined) UCA for $\p{i}$ in $\gamegraph$.
    \qed
    \end{proof}

Now, let us prove \cref{prop:computeUCA}. 
\restateUCA*
\begin{proof}
    By \cref{lemma:graphUnsafeColive}, $\hat{\spec}$ expresses the same language as $\assump\wedge\spec$. Moreover, as $E_i = \hat{E_1}$, both APA's computed by $\computeAPA(\gamegraph,\assump\wedge\spec,i)$ and $\computeAPA(\hat{\gamegraph},\hat{\spec},1)$ uses the same edges of $\p{i}$ in each template. Hence, approximating them with $\approxAPA$ results in same unsafe edge set and colive edge set. Therefore, the UCA's $\approxAPA(\hat{\gamegraph},\hat{\spec},1) = \approxAPA(\gamegraph,\assump\wedge\spec,i)$.
    As $\computeUCA$ returns the UCA $\approxAPA(\hat{\gamegraph},\hat{\spec},1)$, we have $\lang(\assump') = \lang(\assump'')$.
    
    Furthermore, by construction, $\abs{\hat{V}} \leq \abs{V}+\abs{E}$. Hence, by \cref{lemma:computeAPA}, $\computeUCA$ terminates in time $\bigO((\abs{V}+\abs{E})^4)$.
\qed
\end{proof}

\section{Procedure $\computeWin$ and its Correctness}\label{sec:app:win}

The procedure $\computeWin$ is given in \cref{algo:computeWin}, that solves a game $(\gamegraph,\specSE)$ for $\p{i}$ where $\specSE = \assump_1\wedge (\assump_2\Rightarrow\spec)$ with UCA $\assump_1$ for $\p{i}$ and UCA $\assump_2$ for $\playersnoti$. 
Intuitively, as we only need to satisfy $\assump_2$ or $\spec$, once an unsafe-edge of $\assump_2$ is used, we only need to satisfy $\assump_1$.
Hence, we have two copies of game graph: one copy to encode both assumptions and one identical to the game graph of \cref{algo:computeUCA} only to ensure $\assump_1$, both connected by vertices representing unsafe-edges of $\assump_2$.

\begin{algorithm}
	\caption{$\computeWin(\gamegraph,\assumptemp{\safegroup_1}{\colivegroup_1}\wedge(\assumptemp{\safegroup_2}{\colivegroup_2}\Rightarrow\spec),i)$}\label{algo:computeWin}
	\begin{algorithmic}[1]
		\Require A $k$-player game graph $\gamegraph = (V,E,v_0)$, a specification $\specSE = \assump_1\wedge(\assump_2\Rightarrow\spec)$ with UCA $\assump_1 = \assumptemp{\safegroup_1}{\colivegroup_1}$ for $\p{i}$ and UCA $\assump_2 = \assumptemp{\safegroup_2}{\colivegroup_2}$ for $\playersnoti$, and parity specification $\spec=\parity(\priority)$ s.t.\ $\priority:V\rightarrow [0;2d]$.
		\Ensure The set $\team{i}(\gamegraph,\specSE)$.
        \State $\hat{V}_1 \gets V_i\cup \{\hat{v}\mid v\in V_i\}$ and $\hat{V}_2 \gets V_{-i}\uplus (\colivegroup_1\cup\colivegroup_2\cup\safegroup_2) \cup \{\hat{v}\mid v\in V_{-i}\uplus \colivegroup_1\}$\label{algo:computeWin:graph1}
        \State $\begin{aligned}[t] \hat{E}
            &\gets E\setminus(\safegroup_1\cup\safegroup_2\cup\colivegroup_1\cup\colivegroup_2)\cup \{(\hat{u},\hat{v}) \mid (u,v)\in E\setminus (S_1\cup C_1)\}\\
            &\cup \{(u,c),(c,v) \mid c = (u,v)\in\colivegroup_1\cup\colivegroup_2\}
            \cup \{(\hat{u},\hat{c}),(\hat{c},\hat{v}) \mid c = (u,v)\in\colivegroup_1\}\\
            &\cup \{(u,s), (s,\hat{v})\mid s = (u,v) \in\safegroup_2\}
        \end{aligned}$\label{algo:computeWin:graph2}
        \State $\hat{\priority} = \begin{cases}
            \priority(v) &\text{if $v\in V$}\\
            2d &\text{if $v\in \colivegroup_2$}\\
            2d+1 &\text{if $v = u$ or $\hat{u}$ for some $u\in \colivegroup_1$}\\
            0 &\text{otherwise}.
        \end{cases}$\label{algo:computeWin:parity}
        \State $\hat{\gamegraph} = (\hat{V}_1\uplus \hat{V}_2,\hat{E},v_0)$; $\hat{\spec}\gets\parity(\hat{\priority})$\label{algo:computeWin:graph3}
        \State \Return $V\cap \team{1}(\hat{\gamegraph},\hat{\spec})$\label{algo:computeWin:compute}
	\end{algorithmic}
\end{algorithm}

Similar to the construction presented in the previous section, due to the simplicity and locality of UCA's, the modifications required to go from $(\gamegraph,\specSE)$ to $(\hat{\gamegraph},\hat{\spec})$ (\cref{algo:computeWin:graph1,algo:computeWin:graph2} in \cref{algo:computeWin}) are mild enough that the check required in \cref{algo:computeGE:if1} of \cref{algo:computeGE} can be directly performed over $\hat{G}$ without loosing correctness, as shown in \cref{sec:ocomputeGE}.

\begin{restatable}{proposition}{restatecomputeWin}\label{prop:computeWin}
    Given a game graph $\gamegraph = (V,E,v_0)$ with parity specification $\spec$, UCA $\assump_1 = \assumptemp{\safegroup_1}{\colivegroup_1}$ for $\p{i}$ and UCA $\assump_2 = \assumptemp{\safegroup_2}{\colivegroup_2}$ for $\playersnoti$, let $\win = \computeWin(\gamegraph,\specSE,i)$ with $\specSE = \assump_1\wedge(\assump_2\Rightarrow\spec)$. Then, it holds that $v_0\in\team{i}(\gamegraph,\specSE)$ if and only if $v_0\in\win$.
    Furthermore, $\computeWin$ for $\spec$ with $d$ priorities terminates in time $\bigO((2\abs{V}+2\abs{E})^{d+2})$.
\end{restatable}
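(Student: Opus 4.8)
\textbf{Proof plan for \cref{prop:computeWin}.}
The plan is to establish the two directions of the iff separately by exhibiting an explicit correspondence between strategies in $\gamegraph$ and in $\hat{\gamegraph}$, and then count priorities and vertices for the complexity bound. The key structural fact to isolate first is a bijection between plays: as in \cref{lemma:graphUnsafeColive}, every play $\hat{\play}$ in $\hat{\gamegraph}$ restricts to a play $\hat{\play}|_V$ in $\gamegraph$ (dropping the interpolating colive/unsafe-edge vertices and collapsing the hatted copy onto the unhatted one), and conversely every play in $\gamegraph$ that is eventually consistent with the construction lifts. The crucial semantic claim, analogous to \cref{lemma:graphUnsafeColive}(1), is that $\hat{\play}$ satisfies $\hat{\spec}$ if and only if $\hat{\play}|_V$ satisfies $\specSE = \assump_1\wedge(\assump_2\Rightarrow\spec)$. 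I would prove this by a case split on whether the play ever crosses an $s\in\safegroup_2$ vertex: (i) if it does, it moves permanently into the hatted copy, where only $\assump_1$'s colive edges carry the odd priority $2d+1$, so $\hat{\spec}$ there reduces to $\assumpcolive(\colivegroup_1)$, and since the unsafe edges $\safegroup_1$ were deleted, satisfaction of $\hat{\spec}$ is equivalent to satisfaction of $\assump_1$ — which matches $\specSE$ because $\assump_2$ is already falsified; (ii) if it never crosses such a vertex, it stays in the unhatted copy, $\colivegroup_1$-edges give priority $2d+1$ and $\colivegroup_2$-edges give the even priority $2d$, so the maximal-infinitely-often priority is $2d+1$ iff some $\colivegroup_1$-edge is taken infinitely often (violating $\assump_1$), is $2d$ iff no such edge but some $\colivegroup_2$-edge is taken infinitely often (so $\assump_2$ is violated, hence $\assump_2\Rightarrow\spec$ holds vacuously and, since $\safegroup_1$ is deleted, $\assump_1$ holds too), and otherwise equals the original parity value of $\spec$ with both colive conditions satisfied — exactly the three disjunctive cases of $\specSE$.

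With this language-equivalence in hand, the two implications follow by transferring strategies. For the ``if'' direction, suppose $v_0\in\win$, i.e.\ $\p{1}$ wins $(\hat{\gamegraph},\hat{\spec})$ from $v_0$; I would take a winning strategy $\hat{\strati}$ and project it to a $\p{i}$ strategy in $\gamegraph$ (the interpolating vertices belong to whichever player owns the source edge, so the projection is well defined and only ever needs to reproduce $\p{i}$'s genuine choices), then argue every $\strati$-play in $\gamegraph$ lifts to a $\hat{\strati}$-consistent play in $\hat{\gamegraph}$ — here one must be slightly careful that the opponent's moves across $\safegroup_2$ and through colive vertices are always available in $\hat{\gamegraph}$ by construction — hence satisfies $\hat{\spec}$, hence by the equivalence satisfies $\specSE$. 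The ``only if'' direction is symmetric: a $\p{i}$ strategy winning for $\specSE$ in $\gamegraph$ is lifted to $\hat{\gamegraph}$ by mimicking it on both copies and making the forced moves at interpolating vertices, and the equivalence again transfers the win.

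For the complexity bound I would observe $|\hat{V}|+|\hat{E}|$ is linear in $|V|+|E|$ (each copy contributes $\le |V|$ vertices, each colive/unsafe edge contributes one or two fresh vertices and a bounded number of edges, so $|\hat V|\le 2|V|+2|E|$ and similarly for edges), and $\hat{\priority}$ uses $d+2$ priorities (the original range $[0;2d]$ plus the single new odd value $2d+1$), so Zielonka's algorithm~\cite{Zielonka98} runs in $\bigO((2\abs{V}+2\abs{E})^{d+2})$, and the attractor-style restriction to $V$ in \cref{algo:computeWin:compute} is negligible. The main obstacle I anticipate is the bookkeeping in the semantic equivalence, specifically getting the priority assignment in \cref{algo:computeWin:parity} to interact correctly with the two-copy structure: one has to verify that once an $\assump_2$-unsafe edge is taken the remaining obligation genuinely collapses to $\assump_1$ (so giving $\colivegroup_2$ vertices priority $2d$ only in the unhatted copy is both necessary and sufficient), and that the odd priority $2d+1$ on $\colivegroup_1$ vertices dominates everything in the original range on \emph{both} copies — this is where an off-by-one in the priority range or a missed interpolating vertex would break soundness, so I would treat that case analysis as the technical heart of the proof.
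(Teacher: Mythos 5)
Your proposal is correct and follows essentially the same route as the paper: both arguments transfer strategies between $\gamegraph$ and $\hat{\gamegraph}$ via the play restriction $\hat{\play}\mapsto\hat{\play}|_V$, perform the same case analysis on how $\specSE$ is satisfied (an $\safegroup_2$-edge is crossed and the play moves to the hatted copy; a $\colivegroup_2$-edge recurs and contributes the even priority $2d$; or $\spec$ itself holds), and bound the running time via $\abs{\hat{V}}\leq 2(\abs{V}+\abs{E})$ and Zielonka's algorithm. Your framing as a play-level language equivalence followed by strategy lifting is only an organizational variant of the paper's direct strategy transfer (the paper likewise handles the converse direction ``analogously''); just note that your ``is $2d$ iff some $\colivegroup_2$-edge recurs'' is not literally an equivalence, since an original vertex of priority $2d$ can also realize that maximum, though this does not affect the conclusion in either branch.
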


\begin{proof}
Let $v_0\in\team{i}(\gamegraph,\specSE)$. Then, $\p{i}$ has a strategy $\strat$ such that $\strat\vDash\specSE$.
So, $\strat\vDash\assump_1$, and hence, $\strat$ never uses any unsafe edge of $\safegroup_1$.
By construction, we have $E_i\setminus \safegroup_1 = \hat{E_1}$. Hence, the following strategy $\hat{\strat}$ of $\p{1}$ in $\hat{\gamegraph}$ is well-defined: $\hat{\strat}(\hat{\play}) = \strat(\hat{\play}|_V)$ for each $\play\in \hat{V}^*\hat{V_1}$.
It is enough to show that $\hat{\strat}\vDash\hat{\spec}$.
Let $\hat{\play}$ be a $\hat{\strat}$-play.
Let $\play = \hat{\play}|_V$ be the corresponding $\strat$-play.
As $\strat\vDash\assump_1$, $\play$ uses edges of $C_1$ finitely many time, hence, $\hat{\play}$ visits vertices in $C_1$, i.e., ones with priority $2d+1$, finitely often.
Moreover, as $\strat\vDash \neg\assump_2\vee\spec$, it holds that $\play\vDash\neg\assump_2$ or $\play\vDash\spec$.
Let $\play\vDash\neg\assump_2$. If $\play$ uses an edge $s\in S_2$, then $\hat{\play}$ move to the second copy, i.e., $\{\hat{v}\mid v\in V\uplus C_1\}$, via $s$, and hence, only visits priority $0$ infinitely often (as it does not visit priority $2d+1$ infinitely often). So, $\hat{\play}$ satisfies $\hat{\spec}$.
If $\play$ uses an edge $c\in C_2$ infinitely often, then $\hat{\play}$ visits $c$ with priority $2d$ infinitely often, and hence, satisfies $\hat{\spec}$.
Finally, now, suppose $\play\vDash\spec$, then, the maximum priority $\play$ visits infinitely often in $[0;2d]$ is even. As $\hat{\play}$ does not visits priority $2d+1$ infinitely often, the maximum priority it visits is also even.
Therefore, in any case, $\hat{\play}$ satisfies $\hat{\spec}$, and hence, $\hat{\strat}\vDash\hat{\spec}$.
Analogously, using similar arguments, one can show that if $v_0\in\team{1}(\hat{\gamegraph},\hat{\spec})$, then $v_0\in\team{i}(\gamegraph,\specSE)$.

Furthermore, by construction, $\abs{\hat{V}}\leq 2(\abs{V}+\abs{E})$ as there are two copies each with at most $(\abs{V}+\abs{E})$ vertices. Then, the time complexity follows from Zielonka's algorithm's time complexity.
\qed
\end{proof}

\end{document}